\newtheorem{theorem}{Theorem}
\newtheorem{proposition}{Proposition}[section]
\newtheorem{lemma}{Lemma}[section]
\theoremstyle{definition}
\newtheorem{definition}{Definition}[section]
\theoremstyle{remark}
\numberwithin{equation}{section}
\def\XXint#1#2#3{{\setbox0=\hbox{$#1{#2#3}{\int}$}
\vcenter{\hbox{$#2#3$}}\kern-.5\wd0}}
\def\e{\text{e}}
\def\i{\text{i}}
\DeclarePairedDelimiterX\MeijerM[3]{\lparen}{\rparen}%
{\begin{smallmatrix}#1 \\ #2\end{smallmatrix}\delimsixe\vert\,#3}
\newcommand\MeijerG[8][]{%
  G^{\,#2,#3}_{#4,#5}\MeijerM[#1]{#6}{#7}{#8}}
\newcommand\MeijerG*[7]{%
  G^{\,#1,#2}_{#3,#4}\MeijerM*{#5}{#6}{#7}}
\def\braket#1{\mathinner{\langle{#1}\rangle}}
\let\protect\relax
  \xdef\Braket{\protect\expandafter\noexpand\csname Braket \endcsname}
\gdef\csname Braket \endcsname#1{\begingroup
     \ifx\SavedDoubleVert\relax
       \let\SavedDoubleVert\|\let\|\BraDoubleVert
     \fi
     \mathcode`\|32768\let|\BraVert
     \left\langle{#1}\right\rangle\endgroup}
\def\BraVert{\@ifnextchar|{\|\@gobble}
     {\egroup\,\mid@vertical\,\bgroup}}
\def\BraDoubleVert{\egroup\,\mid@dblvertical\,\bgroup}
\let\SavedDoubleVert\relax
  \xdef\set{\protect\expandafter\noexpand\csname set \endcsname}
\gdef\csname set \endcsname#1{\mathinner
        {\lbrace\,{\mathcode`\|32768\let|\midvert #1}\,\rbrace}}
  \xdef\Set{\protect\expandafter\noexpand\csname Set \endcsname}
\gdef\csname Set \endcsname#1{\left\{%
     \ifx\SavedDoubleVert\relax \let\SavedDoubleVert\|\fi
     \:{\let\|\SetDoubleVert
     \mathcode`\|32768\let|\SetVert
     #1}\:\right\}}
\def\midvert{\egroup\mid\bgroup}
\def\SetVert{\@ifnextchar|{\|\@gobble}
    {\egroup\;\mid@vertical\;\bgroup}}
\def\SetDoubleVert{\egroup\;\mid@dblvertical\;\bgroup}
 \edef\@tempa{\meaning\middle}
 \edef\@tempb{\string\middle}
 \def\mid@vertical{\middle|}
 \def\mid@dblvertical{\middle\SavedDoubleVert}
 \def\mid@vertical{\mskip1mu\vrule\mskip1mu}
 \def\mid@dblvertical{\mskip1mu\vrule\mskip2.5mu\vrule\mskip1mu}
\begin{document}

\title[Representation theory and products of random matrices]{Representation theory and products of random matrices in $\text{SL}(2,{\mathbb R})$}

\author{Alain Comtet}
\address{Laboratoire de Physique Th\'eorique et Mod\`eles Statistiques, B\^{a}timent Pascal 530, rue Andr\'{e} Rivi\`{e}re, Universit\'{e} Paris Sud,
91405 Orsay CEDEX, FRANCE}
              \email{alain.comtet@u-psud.fr}        

\author{Christophe Texier}
\address{Laboratoire de Physique Th\'eorique et Mod\`eles Statistiques, B\^{a}timent Pascal 530, rue Andr\'{e} Rivi\`{e}re, Universit\'{e} Paris Sud,
91405 Orsay CEDEX, FRANCE} 
              \email{christophe.texier@u-psud.fr}        

\author{Yves Tourigny}
\address{School of Mathematics\\
        University of Bristol\\
        Bristol BS8 1TW, United Kingdom}
\email{y.tourigny@bristol.ac.uk}

\thanks{This work
has benefited from the financial support of ``Investissements d'Avenir du LabEx
PALM'', (ANR-10-LABX-0039-PALM), under the project ``ProMAFluM''.
We thank the many colleagues with whom we have, over several years, been able to share
and discuss our ideas; in particular Michel Bauer, Aur\'elien Grabsch, Jean--Marc Luck and Mary Maller. Yves Tourigny gratefully acknowledges the support and the warm hospitality received
in the course of frequent visits to the 
Laboratoire de Physique Th\'eorique et Mod\`eles Statistiques.}

\subjclass[2010]{Primary 15B52, Secondary 60B15}

\date{\today}


\begin{abstract}
The statistical behaviour of a product of independent, identically distributed random matrices in 
$\text{SL}(2,{\mathbb R})$ is encoded in the generalised Lyapunov exponent $\Lambda$; this is a 
function
whose value at the complex number $2 \ell$ is the logarithm of the largest eigenvalue of the transfer operator
obtained when one averages, over $g \in \text{SL}(2,{\mathbb R})$, a certain representation $T_\ell (g)$ associated with the product. We 
study some products that arise from models of one-dimensional disordered systems. These models
have the property that the inverse of the transfer operator takes the form of a second-order difference or differential operator.
We show how the ideas expounded by N. Ja. Vilenkin  in his book [{\em Special Functions and the Theory of Group Representations}, American Mathematical Society, 1968.] can be used to
study the generalised Lyapunov exponent. In particular, we derive explicit
formulae for the almost-sure growth and for the variance of the corresponding products.
\end{abstract} 

\maketitle

\section{Introduction}
\label{introductionSection}
In his remarkable paper \cite{Fu}, 
Furstenberg succeeded in generalising the law of large numbers
to products
$$
\Pi_n := g_n \cdots g_2 g_1
$$
of independent, identically-distributed random elements of a matrix group $G$.
In his theory, an important part is played by a compact homogeneous space, now known as the {\em Furstenberg boundary}, on which the group acts, and the problem of determining the growth rate of the product is reduced to that of finding a certain measure on this boundary that is invariant
under the action of elements drawn at random from the group.
The growth rate is called the {\em Lyapunov exponent} of the product, and the problem of computing it--- or the invariant measure--- given a probability measure on
the group, is notoriously difficult. 

By elaborating ideas that go back to the seminal works of Dyson \cite{Dy}, 
Schmidt \cite{Sc}, Frisch \& Lloyd \cite{FL}, Halperin \cite{Ha}, Kotani \cite{Ko} and Nieuwenhuizen \cite{Ni}, we have, over the past few
years, found several instances
of products of random matrices in $\text{SL}(2,{\mathbb R})$ for which the growth rate can be obtained explicitly in terms of special functions such as the hypergeometric, Bessel, confluent hypergeometric functions and so on; see the survey
\cite{CTT2}. There are also a few cases involving larger matrix groups where calculations have been possible; two such cases were found by Newman \cite{New} and Forrester \cite{Fo}, who succeeded in obtaining formulae for {\em all} the Lyapunov exponents of certain products in terms of the gamma and dilogarithmic functions; the key observation, in these
particular cases, is that the invariant measure coincides with the rotation-invariant measure.

Now, since Vilenkin's classic book \cite{Vi}, it is widely appreciated that there is a close relationship between special functions and the theory of group representation.
It is the purpose 
of the present paper to explain how Vilenkin's ideas may be brought to bear on this problem. To the best of our knowledge, this is, in itself, an original contribution
to the literature on products of random matrices; it has the merit
of providing a unified presentation of the ideas originating from the work of Frisch \& Lloyd, with the practical benefit that calculations can be pushed further, using concepts
of such generality that one may also envisage applications to other groups.
There is some overlap
with the approach and the results
that will appear in a forthcoming publication by one of us \cite{Te}, but here, although we shall not assume any prior knowledge of the subject, the emphasis is on the connections with representation theory. 

The focus of our analysis is the {\em generalised Lyapunov exponent}, which will be defined
in the next paragraph. This is an important mathematical object, relevant in several physical contexts: chaotic dynamics 
and multifractal analysis
\cite{CPV}; fluid dynamics \cite{Va}; classical dynamics
of an oscillator driven by parametric noise \cite{ST1,ZP}; polymers in a random landscape \cite{FLRT}. In particular, the generalised
Lyapunov exponent yields information on the fluctuations of products of random matrices; these fluctuations are of interest in relation with Anderson localisation \cite{DLA,DS,RT,ST2} and also in relation with disordered Ising spin chains \cite{CGG}.

In the remainder of this extensive 
introduction, we provide a summary of those facts that we shall need in order to develop our material.
We then describe the content of the paper in broad terms, state our main results, and provide some indication of how these results relate to the work of others--- postponing the technical details to the remaining sections.
We shall deal exclusively with the case where
$$
G := \text{SL}(2,{\mathbb R})
$$
and then comment at the end of the paper on the possibility of extending
our ideas to other semi-simple groups.

\subsection{The generalised Lyapunov exponent and Tutubalin's transfer operator}
\label{tutubalinSubsection}
Inspired by the work of Furstenberg, Tutubalin \cite{Tu} proved a central limit theorem for products
of random matrices; he showed that, as $n \rightarrow \infty$,
$$
\frac{\ln \left | {\mathbf x} \Pi_n \right | - n \gamma}{\sigma \sqrt{n}}
$$
converges in distribution to a normal random variable.  Here
\begin{equation}
\gamma := \lim_{n \rightarrow \infty} \frac{1}{n} \,{\mathbb E} \left ( \ln \left | {\mathbf x} \Pi_n \right | \right ) 
\label{growthRate}
\end{equation}
and
\begin{equation}
\sigma^2 := \lim_{n \rightarrow \infty} \frac{1}{n} \,{\mathbb E} \left (  \ln^2 \left | {\mathbf x} \Pi_n \right | - n^2 \gamma^2 \right )
\label{variance}
\end{equation}
where ${\mathbf x}$ is an arbitrary non-zero vector, $| \cdot |$ is
an arbitrary norm on the vector space, and the expectation is taken with respect to the probability measure on the group. It is a fact that the limits do not depend on the precise choice of ${\mathbf x}$ and $| \cdot |$.

Both $\gamma$ and $\sigma^2$
may be expressed in terms of the generalised Lyapunov exponent
of the product, defined by
\begin{equation}
\notag
\Lambda({2 \ell})  := \lim_{n \rightarrow \infty} \frac{\ln {\mathbb E} \left ( \left | \Pi_n^t \Pi_n \right |^{\ell} \right )}{n}
\end{equation}
where $g^t$ denotes the transpose of $g$. Here, we have chosen to write the argument of $\Lambda$ in this unusual way to indicate its relationship with the index $\ell$ used by Vilenkin \cite{Vi} and Vilenkin \& Klimyk \cite{VK1} to label a certain family of representations of the group $\text{SL}(2,{\mathbb R})$.
The existence of the limit in this definition is a non-trivial matter but, in case of existence, 
it is again a fact that the limit on the right-hand side does not
depend on the choice of norm.
It is also clear from the definition that the generalised Lyapunov exponent is unchanged
if we replace the $g_n$ in the product by
$g_n^t$. Other equivalent definitions are used in the literature: 
\begin{equation}
\Lambda ({2 \ell}) 
= \lim_{n \rightarrow \infty} \frac{\ln {\mathbb E} \left ( \left | \Pi_n {\mathbf x} \right |^{2 \ell} \right )}{n} 
= \lim_{n \rightarrow \infty} \frac{\ln {\mathbb E} \left ( \left | {\mathbf x} \Pi_n \right |^{2 \ell} \right )}{n}\,.
\label{generalisedLyapunovExponent}
\end{equation}
The last of these formulae yields
\begin{equation}
\gamma = \Lambda'(0) \;\;\text{and}\;\;\sigma^2 = \Lambda''(0)\,.
\label{growthRateAndVariance}
\end{equation}

Tutubalin proved his theorem by expressing the generalised Lyapunov exponent as the largest eigenvalue of a certain operator.
To describe his approach, we begin by remarking that
every matrix
$$
g =  \begin{pmatrix} a & b \\
c & d \end{pmatrix}  \in G
$$
acts on ${\mathbb R}_\ast^2 := {\mathbb R}^2 \backslash \{ {\mathbf 0} \}$ from the right according to
$$
{\mathbf x} \cdot g := \begin{pmatrix} x_1 & x_2 \end{pmatrix} \begin{pmatrix} a & b \\
c & d \end{pmatrix} = \begin{pmatrix} a x_1 + c x_2 & b x_1 + d x_2 \end{pmatrix}\,.
$$
To every $g \in G$, we assign a linear map, denoted $T(g)$, defined on the space $V$ of all functions 
$v : \,{\mathbb R}_\ast^2 \rightarrow {\mathbb C}$ by
\begin{equation}
\left [ T(g) v \right ] ({\mathbf x}) := v ( {\mathbf x} \cdot g ) = v \left ( a x_1 + c x_2, b x_1 + d x_2 \right )\,.
\label{rightRepresentation}
\end{equation}
The map $g \mapsto T(g)$ obtained in this way is a {\em representation} of the group
--- namely, it satisfies the following properties: Firstly,
if $e$ denotes the identity matrix then $T(e)$ is the identity operator; secondly,
$$
\forall\, g_1,\,g_2 \in G\,,\;\;T(g_1 g_2) = T({g_1}) \,T({g_2})\,.
$$
The vector space $V$, on which the operators $T(g)$ are defined, is called the {\em representation space}. 

We introduce the function $1_\ell \in V$
defined by
\begin{equation}
1_\ell ({\mathbf x}) = | {\mathbf x} |^{2 \ell}\,.
\label{unitFunction}
\end{equation}
For every
${\mathbf x} \in {\mathbb R}_\ast^2$ and every $g \in G$, 
we may write
\begin{equation}
\notag
\left [ T (g) {1}_\ell \right ] ({\mathbf x}) = {1}_\ell \left ( {\mathbf x} \cdot g \right ) 
= \left | {\mathbf x} g \right |^{2 \ell}\,.
\end{equation}
Hence
$$
\left | {\mathbf x} \Pi_n \right |^{2 \ell}
= \left [ T \left ( {\Pi_n} \right ) {1}_\ell \right ] ({\mathbf x})\,.
$$
By using the fact that the $g_n$ are independent with the same distribution, we deduce
\begin{equation}
{\mathbb E} \left (  \left | {\mathbf x} \Pi_n \right |^{2 \ell} \right ) =  \left [ {\mathscr T}^n {1}_\ell \right ] ({\mathbf x})
\label{representationFormula}
\end{equation}
where
\begin{equation}
{\mathscr T} := {\mathbb E} \left [ T (g) \right ]\,.
\label{averagedRepresentation}
\end{equation}
In the literature, the operator ${\mathscr T}$ is often called the {\em transfer operator} 
associated with the product of random matrices \cite{BQ,Po}.
{Heuristically, if this operator admits an eigenvalue of largest modulus, then the generalised Lyapunov
exponent will be obtained by taking its logarithm.}

In his paper, Tutubalin proved that this heuristic argument is indeed correct if one assumes that the 
matrices are drawn from a probability distribution that has a density with respect to the Haar measure 
on the group $G$. This hypothesis
has since been weakened considerably; see \cite{BQ, BL} and the references therein.
Important and relevant as this question is, we shall not be concerned with stating or proving general theorems that guarantee the existence of an eigenvalue 
of the transfer operator of largest modulus.  Rather, our aim in this paper is to bring out the practical usefulness of representation
theory by considering specific probability measures on the group for which the heuristic argument leads to explicit calculations of $\gamma$ and $\sigma^2$.


\subsection{Reducibility, equivalence and unitarity of representations}
\label{basicsSubsection}
The analysis of the spectral problem
obviously entails the search for vector spaces invariant
under the action of the transfer operator. 
\begin{definition}
We say that a subspace of $V$ is {\em invariant} under a representation $T$ if, for every $g \in G$, $T(g)$ maps the subspace to itself. $T$ is called {\em reducible} if its representation space $V$ contains a proper invariant subspace.  Otherwise, it is called {\em irreducible}.
\label{invarianceDefinition}
\end{definition}
Every subspace invariant for the representation is also invariant for the transfer
operator.
Now, the particular representation defined by Equation (\ref{rightRepresentation})
is reducible, as the following makes clear.

\begin{definition}
We say that $v$ is {\em even} (respectively {\em odd}) if
$$
\forall\, {\mathbf x} \in {\mathbb R}_\ast^2,\;\; v(-{\mathbf x}) = v({\mathbf x})\;\;\text{(respectively $-v({\mathbf x})$)}\,.
$$
For $\ell \in {\mathbb C}$, we say that $v$ is {\em homogeneous of degree $2 \ell$} if
$$
\forall\, \tau > 0\,,\;\forall\, {\mathbf x} \in {\mathbb R}_\ast^2,\;\; v( \tau {\mathbf x}) = \tau^{2 \ell}
v({\mathbf x})\,.
$$
\label{parityHomogeneityDefinition}
\end{definition}
It is readily verified that the subspace consisting of all smooth {\em even} homogeneous functions of degree $2 \ell$--- which we shall denote by $V_{\ell}$--- is invariant for $T$; the restrictions of $T$ 
and ${\mathscr T}$ to $V_{\ell}$ will be denoted
by $T_\ell$ and ${\mathscr T}_\ell$ respectively.
Another invariant subspace of $V$ is that consisting of all smooth {\em odd} homogeneous functions of degree $2 \ell$; we shall not need to consider this invariant subspace because 
the function ${1}_\ell$ introduced earlier and defined by 
Equation (\ref{unitFunction}) happens to be {even}. 

Although the functions in the space $V_\ell$ are defined on the punctured plane, the requirement of homogeneity and evenness implies that they are completely characterised by the values they take on certain curves. These curves may be understood as parametrisations of the Furstenberg boundary, and there is some flexibility in choosing them; this will prove convenient when we
consider specific probability measures on the group. The important point is that, although
we may work with different curves, it is essentially always the same representation that is
involved.

\begin{definition}
Two representations $\widetilde{T}$ and $T$ with respective representation spaces $V$ and $\widetilde{V}$ are called {\em equivalent}
if there exists an invertible map, say $Q :\,V \rightarrow \widetilde{V}$, such that
$$
\widetilde{T} \,Q = Q \,T\,.
$$
We then say that $\widetilde{T}$ and $T$ are {\em realisations}, in the spaces $\widetilde{V}$ and $V$
respectively, of the same representation.
\label{EquivalenceDefinition}
\end{definition}

Finally, in discussing the spectral problem for the transfer operator, we need to consider
also the spectral problem for its {\em adjoint}, and this raises the technical question 
of finding an appropriate inner product for the space $V_\ell$.
\begin{definition}
The representation $T_\ell$ is called {\em unitary} if $V_\ell$ is equipped with an inner product,
say $\braket{\cdot|\cdot}$, such that
$$
\forall\, g \in G,\;\forall\,f,\,v \in V_\ell,\,\;\;\braket{T_\ell (g) f | T_\ell (g) v} = \braket{f | v}\,.
$$
\label{unitaryDefinition}
\end{definition}
We remark however that, even if an inner product can be found that makes $T_\ell$ unitary, this
property is lost upon averaging over the group, and so will play only a very minor
part in what follows.

Starting with Bargmann \cite{Ba}, the family of representations $T_\ell$ has been thoroughly studied \cite{GGV, Vi,VK1}. We summarise in the following theorem those properties that will be relevant
to our purpose:
\begin{theorem}
The representation $T_\ell$ is {\em reducible} if and only if $\ell \in {\mathbb Z}$. In the reducible
case,
it may be decomposed
into three irreducible subrepresentations: one, denoted $T_\ell^0$, in a space of finite dimension, and two others, denoted
$T_\ell^+$ and $T_\ell^-$ and called ``holomorphic'', in spaces of infinite dimension.

For $\ell \in {\mathbb Z}$ and $\varepsilon \in \{ -,0,+\}$ (respectively $\ell \notin {\mathbb Z}$), the representations $T_\ell^{\varepsilon}$ and $T_{-\ell-1}^\varepsilon$ (respectively $T_\ell$ and $T_{-\ell-1}$)
are {\em equivalent}.

Of the finite-dimensional representations, only $T_0^0$ is {\em unitary}, as it corresponds to the trivial representation in a one-dimensional space. The holomorphic subrepresentations are unitary with respect to a certain inner product and belong to the so-called discrete series.
The only other values of $\ell$ for which $T_\ell$ can be made unitary are
$$
\ell = -\frac{1}{2} + \text{\rm i} t\,,\;\;t \in {\mathbb R}\;\; \text{and}\;\;
 \ell \in (-2,-1) \cup (0,1)\,.
$$
They are associated with the principal and the supplementary series respectively.
\label{bargmannTheorem}
\end{theorem}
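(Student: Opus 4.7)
The plan is to analyze $T_\ell$ through its decomposition under the maximal compact subgroup $K\cong SO(2)$, which diagonalises cleanly and yields an explicit basis on which one can track the full $G$-action. First I would realise $V_\ell$ concretely: a smooth even function on $\mathbb{R}_\ast^2$ homogeneous of degree $2\ell$ is determined by its restriction to $S^1/\{\pm 1\}$, so $V_\ell$ is identified with smooth functions of $\theta\in[0,\pi)$, on which $K$ acts by translation. The ``$K$-basis'' $e_n(\theta) = \text{e}^{2\text{i}n\theta}$, $n\in\mathbb{Z}$, then diagonalises $T_\ell|_K$ and provides a convenient framework for computing the Lie-algebra action.

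For the reducibility statement, differentiating $T_\ell$ at the identity gives the action of $\mathfrak{sl}(2,\mathbb{R})$, and a direct calculation shows that the non-compact generators act as ladder operators $J_\pm e_n = \alpha_n^\pm(\ell)\,e_{n\pm 1}$, where each $\alpha_n^\pm(\ell)$ factors into linear expressions in $n$ and $\ell$ whose zeros lie at $n=\pm\ell$ and $n=\mp(\ell+1)$ (up to sign conventions). For $\ell\notin\mathbb{Z}$ none of these coefficients vanishes on $\mathbb{Z}$, so the ladders connect every pair of basis vectors and $T_\ell$ is irreducible. For $\ell\in\mathbb{Z}$ the zeros partition $\mathbb{Z}$ into three $\mathfrak{sl}_2$-stable blocks $\{|n|\le\ell\}$, $\{n\ge\ell+1\}$ and $\{n\le-\ell-1\}$; the same ladder argument then shows that each block is irreducible, yielding the three subrepresentations $T_\ell^0$, $T_\ell^+$ and $T_\ell^-$.

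The equivalence $T_\ell\simeq T_{-\ell-1}$ is foreshadowed by the Casimir eigenvalue $\ell(\ell+1)$, which is invariant under $\ell\mapsto -\ell-1$. Working in the non-compact realisation $[T_\ell(g)f](x) = |bx+d|^{2\ell}f\bigl((ax+c)/(bx+d)\bigr)$ on the line, I would construct the fractional-integral intertwiner $(Qf)(x) = \int |x-y|^{-2\ell-2}f(y)\,\text{d} y$ and verify $Q\,T_\ell(g) = T_{-\ell-1}(g)\,Q$ by a change of variables using the M\"obius transformation rule of $|x-y|$. Diagonalising $Q$ on the $K$-basis via Beta-function identities reduces it to a scalar multiple of the identity on each $e_n$, hence invertible off integer poles; at integer $\ell$ one regularises $Q$ to obtain the equivalences between the $T_\ell^\varepsilon$ and $T_{-\ell-1}^\varepsilon$.

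For unitarity, a $K$-diagonal invariant Hermitian form must take the form $\braket{e_m|e_n} = c_n\,\delta_{mn}$ with $c_n$ determined by the recursion $\overline{\alpha_n^+(\ell)}\,c_{n+1} = \alpha_{n+1}^-(\ell)\,c_n$. Positivity of $\{c_n\}_{n\in\mathbb{Z}}$ is compatible with this recursion in exactly the four situations of the theorem: the trivial $T_0^0$; the principal series $\ell = -\tfrac{1}{2}+\text{i} t$, where $\overline{\alpha_n^+(\ell)} = \alpha_{n+1}^-(\ell)$ and $c_n\equiv 1$ works; the supplementary series $\ell \in (-2,-1)\cup (0,1)$, where the real ratio $c_{n+1}/c_n$ remains positive for every $n$; and the holomorphic discrete pieces $T_\ell^\pm$, whose one-sided weight support makes positivity automatic. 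Outside these ranges some ratio $c_{n+1}/c_n$ changes sign, excluding unitarity. The main obstacle is the careful bookkeeping in the reducible case $\ell\in\mathbb{Z}$: there $V_\ell$ is not completely reducible, so the intertwiner $Q$ must be renormalised (its numerator and denominator both vanish) to match subrepresentations with subrepresentations rather than with quotients; every other step in the proof is a classical calculation along the lines of \cite{Ba,GGV,Vi,VK1}.
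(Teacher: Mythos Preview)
Your proposal is correct and follows precisely the ``infinitesimal method'' that the paper itself names as the route to this theorem; indeed the paper does not give a self-contained proof but cites the classical sources \cite{Ba,GGV,Vi,VK1} and only supplies fragments later (the ladder relations (\ref{eigenvectorsOfJ0})--(\ref{raisingLowering}), the invariant subspaces in \S\ref{discreteSection}, and the unitarity of $T_\ell^\pm$ in Proposition~\ref{discreteHolomorphicProposition}), all of which coincide with your outline. The one imprecision---calling the three blocks ``$\mathfrak{sl}_2$-stable'' when in fact for $\ell\ge 0$ only $V_\ell^0$ and $\text{span}\{\mathbf e_{\ell,n}:\pm n\ge -\ell\}$ are genuine subspaces while $T_\ell^\pm$ live on quotients---you already flag at the end, so no correction is needed.
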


\subsection{The one-parameter subgroups of $\text{SL}(2,{\mathbb R})$}
\label{generatorSubsection}
The proof of Theorem \ref{bargmannTheorem} uses the so-called ``infinitesimal method'' which tackles
questions of irreducibility, equivalence and unitarity by making systematic use of the
{\em infinitesimal generators} associated with the one-parameter subgroups
of $\text{SL}(2,{\mathbb R})$.
These subgroups
may be classified in terms of three types:
\begin{enumerate}
\item \underline{Elliptic}
$$
k(t) := \begin{pmatrix}
\cos \frac{t}{2} & -\sin \frac{t}{2} \\
\sin \frac{t}{2} & \cos \frac{t}{2}
\end{pmatrix}\,.
$$
\item \underline{Hyperbolic}
$$
a_1(t) := \begin{pmatrix}
\e^{\frac{t}{2}} & 0 \\
0 & \e^{-\frac{t}{2}}
\end{pmatrix}
\;\;\text{and}\;\;a_2(t) := \begin{pmatrix}
\text{ch} \frac{t}{2} & \text{sh} \frac{t}{2} \\
\text{sh} \frac{t}{2} & \text{ch} \frac{t}{2}
\end{pmatrix}\,.
$$
\item \underline{Parabolic}
$$
n_+(t) := \begin{pmatrix}
1 & t \\
0 & 1
\end{pmatrix}\;\;\text{and}\;\;
n_-(t) := \begin{pmatrix}
1 & 0 \\
t & 1
\end{pmatrix}\,.
$$
\end{enumerate}

By definition, the generator associated with the one-parameter subgroup ${\mathbb R} \ni t \mapsto g(t) \in G$ is the operator on $V_\ell$ that maps $v$ to
$$
\frac{d}{d t} \left \{ T_\ell \left [ g(t) \right ] v \right \} \Bigl |_{t=0}\,.
$$
A list of the infinitesimal generators for various realisations of $T_\ell$
may be found in Tables \ref{infinitesimalGeneratorTable} to \ref{infinitesimalGeneratorInMellinSpace}.  
We shall consider probability measures that lead to an
explicit formula for ${\mathscr T}_\ell$ in terms of these infinitesimal generators.
The fact that {\em these generators are affine functions
of the index $\ell$} will be important in what follows.

The subgroup $K$ occupies a special place in the harmonic analysis of $G$ as the largest
compact subgroup, and there is a countable basis of $V_\ell$ consisting of eigenvectors
of the corresponding infinitesimal generator. We shall denote these basis functions 
${\mathbf e}_{\ell,n}$, $n \in {\mathbb Z}$, regardless of the particular realisation of the representation
$T_\ell$. 
We draw attention to the important fact that the function $1_\ell$, defined by Equation (\ref{unitFunction}), is a multiple
of ${\mathbf e}_{\ell,0}$.

In order to work out how each generator acts on the basis functions, it is helpful to introduce a basis of the Lie algebra consisting of the operators
\begin{equation}
{J}_0 := \i {K}\,,\;\; {J}_\pm := {A}_1 \pm \i {A}_2\,.
\label{jOperators}
\end{equation}
For instance, we have
\begin{equation}
{N}_\pm = \i \left [ \frac{{J}_--{J}_+}{2} \pm {J}_0 \right ]\,.
\label{firstNilpotentIdentity}
\end{equation}
The action of  $N_\pm$ is then easily deduced from the identities
\begin{equation}
{J}_0 \,{{\mathbf e}_{\ell,n}} = n \,{{\mathbf e}_{\ell,n}}
\label{eigenvectorsOfJ0}
\end{equation}
and
\begin{equation}
{J}_+ \,{{\mathbf e}_{\ell,n}} = (\ell-n) \,{{\mathbf e}_{\ell,n+1}}\;\;\text{and}\;\;{J}_- \,{{\mathbf e}_{\ell,n}} = (\ell+n) \,{{\mathbf e}_{\ell,n-1}}\,.
\label{raisingLowering}
\end{equation}
In particular, the {\em Casimir operator} associated with the representation is defined by
\begin{equation}
\notag
{C} := A_1^2 + A_2^2 -K^2 = {J}_0^2 + {J}_0 + {J}_-{J}_+\,.
\end{equation}
This operator commutes with the representation, and it is readily verified that it
is a multiple of the identity operator ${I}$:
\begin{equation}
\notag
{C} = \ell (\ell+1) \,{I}\,.
\end{equation}

\begin{table}
\begin{tabular}{c  c  c  c c}
& Plane & Circle & Line & Hyperbola \\
\hline
\hline
 & & & & \\
${K}$ & $\frac{x_2}{2} \frac{\partial}{\partial x_1}-\frac{x_1}{2} \frac{\partial}{\partial x_2}$ & $-\frac{d}{d \theta}$ & $- \ell x + \frac{1+x^2}{2} \frac{d}{dx}$ & $\frac{\ell}{2} (\frac{1}{x}-x) + \frac{1+x^2}{2} \frac{d}{dx}$ \\
& & & & \\
${A}_1$ &  $\frac{x_1}{2} \frac{\partial}{\partial x_1}-\frac{x_2}{2} \frac{\partial}{\partial x_2}$ & $\ell \cos \theta -\sin \theta \frac{d}{d \theta}$ & $-\ell + x \frac{d}{dx}$ & $x \frac{d}{dx}$ \\
& & & & \\
${A}_2$ & $\frac{x_2}{2} \frac{\partial}{\partial x_1}+\frac{x_1}{2} \frac{\partial}{\partial x_2}$ & $\ell \sin \theta + \cos \theta \frac{d}{d \theta}$  & $\ell x +\frac{1- x^2}{2} \frac{d}{dx}$ & $\frac{\ell}{2} (\frac{1}{x}+x) + \frac{1-x^2}{2} \frac{d}{dx}$ \\
& & & & \\
 ${N}_+$ & $x_1 \frac{\partial}{\partial x_2}$ & $\ell \sin \theta + (\cos \theta+1) \frac{d}{d \theta}$ & $2 \ell x -x^2 \frac{d}{d x}$  & $\ell x -x^2 \frac{d}{d x}$ \\
 & & & & \\
 ${N}_-$ & $x_2 \frac{\partial}{\partial x_1}$ & $\ell \sin \theta + (\cos \theta-1) \frac{d}{d \theta}$  & $\frac{d}{d x}$ & $\frac{\ell}{x}+\frac{d}{d x}$ \\
 & & & & \\
 \hline
\\[0.125cm]
\end{tabular}
\caption{Infinitesimal generators associated with the representation $T_\ell$ for various one-parameter subgroups.
The columns correspond to equivalent realisations of the representation in spaces of functions
on the punctured plane, the circle, the line and the hyperbola respectively.}
\label{infinitesimalGeneratorTable}
\end{table}

\subsection{One-dimensional disordered systems}
\label{disorderedSubsection}
The class of probability measures for which the transfer operator has a simple 
expression in terms of generators arises naturally in connection
with the one-dimensional disordered systems that were studied by Dyson \cite{Dy} and
Frisch \& Lloyd \cite{FL}. So
we provide a brief description of these models. The reader interested in their physical
relevance will find further details in \cite{CTT2,Te}.

Consider the differential equation
\begin{equation}
\notag
-\psi'' + U  \psi = \omega^2 M' \psi\,, \quad x > 0\,.
\end{equation}
Here, $U$ denotes the generalised function
$$
U (x) = \sum_{j} u_j \,\delta (x-x_j)
$$
where the $x_j$ are the values taken by a Poisson process of intensity $\rho$, so that the jumps
$$
l_j := x_{j+1}-x_j  > 0
$$
are independent and exponentially distributed with mean $1/\rho$.
If we put $M'(x) \equiv 1$ and draw the $u_j$ independently from some probability
distribution, we obtain a model for the energy
levels $\omega^2$ of a particle subject to a one-dimensional potential with impurities \cite{FL}.
By introducing the derivative $\psi'$ as an additional unknown, we can write this second-order differential equation as a first-order system.
For $\omega^2 =1$, the solution of the corresponding Cauchy problem may be expressed as a 
product $\Pi_n$ with elements
of the form
\begin{equation}
\notag
g_j = \underbrace{\begin{pmatrix}
\cos l_j & - \sin l_j \\
\sin l_j & \cos l_j
\end{pmatrix}}_{k(2 l_j) \in K}
\underbrace{\begin{pmatrix}
1 & u_j \\
0 & 1
\end{pmatrix}}_{n_+ (u_j) \in N_+}\,.
\end{equation}
The analysis carried out by Frisch \& Lloyd \cite{FL} and, later, Kotani \cite{Ko},
focused on the random process associated with the ``Riccati variable'' $\psi'/\psi$.

By setting $U \equiv 0$ and
$$
M'(x) = \sum_j m_j \, \delta (x-x_j)
$$
we obtain instead a model for a (discrete) vibrating string, where $m_j$ is the punctual
mass at $x_j$ and $\omega$ is now interpreted as a characteristic
frequency. The solution of the Cauchy problem in the case $\omega=1$
is then a product of matrices of the form
\begin{equation}
\notag
g_j = \underbrace{\begin{pmatrix}
1 & 0 \\
l_j & 1
\end{pmatrix}}_{n_-(l_j) \in N_-}
\underbrace{\begin{pmatrix}
1 & -m_j \\
0 & 1
\end{pmatrix}}_{n_+ (-m_j) \in N_+}\,.
\end{equation}
Dyson \cite{Dy} considered the case where the $l_j$ and $m_j$ are drawn from gamma distributions. We alert the reader to the fact that
our notation differs from Dyson's; his string equation is
$$
K_j \left ( x_{j+1}-x_{j} \right ) + K_{j-1} \left ( x_{j-1}-x_j \right ) = -m_j \omega^2 x_j
$$
where the $x_j$ are the positions of particles coupled together by springs that obey Hooke's law, and $K_j$ is the elastic modulus
of the spring between the $j$th and $(j+1)$th particles. The correspondence between 
this and our own equation is
$$
x_j \sim \psi (x_j) \;\;\text{and}\;\; K_j \sim 1/l_j\,.
$$

Other models of a similar kind have been considered more recently in which $M' \equiv 1$
and the potential function $U$ is ``supersymmetric'' \cite{CTT}:
$$
U = \frac{w'}{2} + \frac{w^2}{4}\,,\;\;w(x) := \sum_j w_j \,\delta (x-x_j)\,.
$$
These models lead to products of matrices of the form
\begin{equation}
\notag
g_j = \underbrace{\begin{pmatrix}
\cos l_j & - \sin l_j \\
\sin l_j & \cos l_j
\end{pmatrix}}_{k(2 l_j) \in K}
\underbrace{\begin{pmatrix}
e^{\frac{w_j}{2}} & 0 \\
0 & e^{-\frac{w_j}{2}}
\end{pmatrix}}_{a_1 (w_j) \in A_1}
\end{equation}
or
\begin{equation}
\notag
g_j = \underbrace{\begin{pmatrix}
\text{ch} \,l_j & \text{sh} \,l_j \\
\text{sh} \,l_j & \text{ch} \,l_j
\end{pmatrix}}_{a_2(2 l_j) \in A_2}
\underbrace{\begin{pmatrix}
e^{\frac{w_j}{2}} & 0 \\
0 & e^{-\frac{w_j}{2}}
\end{pmatrix}}_{a_1 (w_j) \in A_1}\,.
\end{equation}

All these models have two features in common that make the corresponding transfer operator
tractable: Firstly, the matrices consist of two {\em independent} components
taken from two of the one-parameter subgroups, say $D$ and $E$. This implies that the transfer operator {\em factorises}: Indeed, for
$$
g_j = d (t_j)\, e(\tau_j)
$$ 
we may write
\begin{equation}
\notag
{\mathscr T}_\ell = {\mathbb E} \left [ T_\ell \left ( g_j \right ) \right ] 
= {\mathbb E} \left \{ T_\ell \left [ d (t_j) \right ] \right \} \,{\mathbb E} \left \{ T_\ell \left [ e(\tau_j) \right ] \right \}
= {\mathbb E} \left ( e^{t_j {D}}\right ) \,{\mathbb E} \left ( e^{\tau_j {E}}\right ) 
\end{equation}
where, with some abuse of notation, ${D}$ and ${E}$ are the infinitesimal generators associated with each of the subgroups. 
The second important feature is that one of the random parameters involved in the definition
of $g_j$ has an exponential or, more generally, a gamma distribution. 
For definiteness, suppose that
$$
{\mathbb P} \left [ \tau_j >  \tau \right ] = e^{-\rho \tau}\,.
$$
It then follows that
$$
{\mathbb E} \left ( e^{\tau_j {E}} \right ) = \left ( 1- \frac{1}{\rho} {E} \right )^{-1}
$$
and there only remains to deal with the component corresponding to the other
one-parameter subgroup--- in this case $D$. We can do so by working with a realisation
of the representation $T_\ell$ that is {\em diagonal} with respect to that subgroup.
In such a realisation, the infinitesimal generator associated with $D$ is independent 
of $\ell$ and
acts on the representation space by multiplication by a fixed function, so that, after
averaging, the $d$ component
is essentially the characteristic function of the random variable $t_j$. Hence, since $E$ is an affine function of $\ell$, by working
with the {\em inverse} ${\mathscr T}_\ell^{-1}$ we
obtain a spectral problem equivalent to that for the transfer operator, in terms of a second-order difference/differential operator that depends linearly
on the index $\ell$.

The utility of this approach depends on finding realisations
of the representation $T_\ell$ that are diagonal with respect to a given subgroup. This matter
is discussed in considerable detail in Vilenkin's book \cite{Vi} and in its sequel \cite{VK1}; we shall draw heavily on that material. The upshot
is that we can--- at least for three of the subgroups--- find such realisations, all of them equivalent to the basic realisation
associated with even homogeneous functions on the punctured plane.
In order to guide the reader through this collection, we have
organised it in terms of a geometrical theme, based on particular curves that serve as parametrisations
of the Furstenberg boundary; see Figure \ref{fig:overview}.
To each curve is associated a
functional transform that diagonalises the representation with respect to some subgroup. 

\subsection{The finite-dimensional case}
\label{finiteDimensionalSubsection}
For $\ell \in {\mathbb Z}$, the representation $T_\ell$ is reducible. One of the invariant subspaces is
\begin{equation}
V_\ell^0 := \underset{-\ell \le n \le \ell}{\text{span}} \{ {{\mathbf e}_{\ell,n}} \}\,.
\label{finiteDimensionalInvariantSubspace}
\end{equation}
Now, the function 
$1_\ell$, defined by Equation (\ref{unitFunction}) and appearing in Equation (\ref{representationFormula}), happens to belong to $V_\ell^0$; it may therefore be decomposed in terms of the eigenvectors of the restriction of
${\mathscr T}_\ell^{-1}$ to $V_\ell^0$. It follows that the generalised Lyapunov exponent
can, for $\ell \in {\mathbb Z}$, be obtained by purely algebraic
means--- a fact that has been observed in the physics literature \cite{ST1,Va,ZP} without reference to
representation theory.
For example, the smallest eigenvalue of ${\mathscr T}_0^{-1}$ is readily deduced by restricting this operator to the one-dimensional invariant 
subspace
$$
V_0^0 := \text{span} \{ {\mathbf e}_{0,0} \}\,.
$$
The restriction has only one eigenvalue, namely $\lambda= \lambda_0 := 1$. 
We choose a non-zero multiple of ${\mathbf e}_{0,0}$ and denote it by $v_0$.

\subsection{The adjoint spectral problem and the Dyson--Schmidt equation}
\label{dysonSchmidtSubsection}
In the general case, it will be more convenient to work
with the {\em adjoint} 
\begin{equation}
{\mathscr A}_\ell  := \left ( {\mathscr T}_\ell^{-1} \right )^\ast
\label{adjointOfInverse}
\end{equation}
of ${\mathscr T}_\ell^{-1}$.
We shall see that this adjoint operator is defined in the space $V_{\ell^\ast}$
where
\begin{equation}
\ell^\ast = -\overline{\ell}-1
\label{ellStar}
\end{equation}
and the bar denotes complex conjugation.
The relationship between the indices $\ell$ and $\ell^\ast$ will be of great significance for what follows.
We denote by
$\braket{\cdot |\cdot}$ a positive-definite Hermitian form expressing the duality between
the spaces $V_\ell$ and $V_{\ell^\ast}$, such that
$\{ {\mathbf e}_{\ell^\ast,n} \}_{n \in {\mathbb Z}} \subset V_{\ell^\ast}$ and $\{ {{\mathbf e}_{\ell,n}} \}_{n \in {\mathbb Z}} \subset V_\ell$ constitute a bi-orthogonal system: 
$$
\braket{{\mathbf e}_{\ell^\ast,m} | \,{\mathbf e}_{\ell, n}} = \begin{cases}
1 & \text{if $m=n$} \\
0 & \text{otherwise}
\end{cases}\,.
$$
The adjoint spectral problem then takes the form: Find $\lambda \in {\mathbb C}$ and 
$0 \ne f \in V_{\ell^\ast}$ such that
\begin{equation}
\overline{\lambda} \,f = {\mathscr A}_\ell f = \left ( {\mathscr A}_0 + \overline{\ell}\,
{\mathscr B} \right ) f\,.
\label{adjointSpectralProblem}
\end{equation}
Here,  we have used the fact, discussed
at the end of \S \ref{disorderedSubsection}, that for the products we consider, we can work with a realisation such that ${\mathscr T}_\ell^{-1}$ is an affine function of $\ell$, so that
its adjoint ${\mathscr A}_\ell$ is an affine function of $\overline{\ell}$. 
More precisely, we shall be interested in computing the {\em smallest} eigenvalue
$\overline{\lambda}$. By virtue of Equation (\ref{adjointOfInverse}), $1/\lambda$ is then the largest eigenvalue of the transfer operator, and so we can access the generalised Lyapunov exponent via the formula
\begin{equation}
  \label{eq:RelationLambdaSmallBig}
  \lambda = \exp[-\Lambda(2\ell)]
  \:.
\end{equation}

For $\ell=0$, we have $\lambda =\lambda_0 = 1$, and so Equation (\ref{adjointSpectralProblem}) becomes
\begin{equation}
\left ( {\mathscr A}_0 - {I} \right ) f_0 = 0\,.
\label{dysonSchmidtEquation}
\end{equation}
This is called {\em the Dyson--Schmidt equation} in the physics literature and it has a useful
interpretation in terms of the Furstenberg theory: As mentioned already,
under certain conditions on the distribution of the matrices in the product,
there exists a probability measure on the Furstenberg boundary, invariant under the action of the
matrices drawn at random from the group. One may think of $f_0$ as the ``density'' of the
invariant measure with respect to some fixed measure associated with the particular
realisation. With some abuse of terminology, we shall
henceforth refer to $f_0$ as the {\em invariant density} associated with the product,
even though the invariant measure may be singular or merely continuous.

\subsection{Perturbative solution about $\ell=0$}
\label{perturbativeSubsection}
It is apparent from the explicit form of the generators that ${\mathscr A}_\ell$
is a regular perturbation of ${\mathscr A}_0$.
If there exists an invariant density $f_0$, we can,
for small values of the index $\ell$, look for a solution of the adjoint spectral problem 
(\ref{adjointSpectralProblem}) of the form
\begin{equation}
\lambda = 1 + \lambda_1 \ell + \lambda_2 \ell^2 + \cdots
\;\;\text{and}\;\;
f = f_0 + f_1 \overline{\ell} + f_2 \overline{\ell^2} + \cdots\,.
\label{perturbationExpansion}
\end{equation}
The problem is then identical to that considered in \cite{VL}, Chapter 1, \S 3.
Upon equating like powers of $\overline{\ell}$, we obtain the recurrence relation
\begin{equation}
\left ( {\mathscr A}_0 - {I} \right ) f_{j} = \sum_{i=1}^j \overline{\lambda_i} f_{j-i} - {\mathscr B} \,f_{j-1}\,,\;\; j \in {\mathbb N}\,,
\label{recurrenceRelation}
\end{equation}
with the starting value $f_{-1} = 0$. We choose the normalisation 
\begin{equation}
\braket{f_j | \,v_0}  = \begin{cases}
1 & \text{if $j=0$} \\
0 & \text{otherwise}
\end{cases}\,.
\label{perturbativeNormalisationCondition}
\end{equation}

The effective solution of Equation (\ref{recurrenceRelation})
depends to some extent on the details of the particular 
realisation of the representation $T_\ell$, but the guiding principles may be described as follows:
The operator ${\mathscr A}_0-{I}$ has a non-trivial kernel and so, for a solution
$f_j$ of Equation (\ref{recurrenceRelation}) to exist, the right-hand side cannot be arbitrary:
For the simplest realisation--- that on the circle--- we require that the right-hand side
of Equation (\ref{recurrenceRelation}) be bi-orthogonal to $v_0$, the eigenvector
of ${\mathscr T}_0^{-1}$ corresponding to the eigenvalue $\lambda_0=1$; see \cite{VL}.
In view of the normalisation (\ref{perturbativeNormalisationCondition}),
this yields
\begin{equation}
\lambda_j  = \braket{{\mathscr B}  f_{j-1} |\, v_{0}} \,.
\label{eigenvalueFormula}
\end{equation}
With this formula for $\lambda_j$, we may then write
\begin{equation}
\notag
\sum_{i=1}^j \overline{\lambda_i} f_{j-i} - {\mathscr B} \,f_{j-1} = r_j
\end{equation}
where 
$$
r_j \in  V_0^\perp = \underset{n \ne 0}{\text{span}} \{ {\mathbf e}_{-1,n} \} =
\underset{n < 0}{\text{span}} \{ {\mathbf e}_{-1,n} \} \cup \underset{n > 0}{\text{span}} \{ {\mathbf e}_{-1,n} \}\,.
$$
These last two subspaces of $V_{-1}$ are invariant under ${\mathscr A}_0 - {I}$, and it turns out  that $f_j$ may be obtained by considering the restriction of Equation
(\ref{recurrenceRelation}) to any one of them. Intuitively, one might think of this 
simplification as a consequence of the fact that, in the case $\ell=0$, there are two subrepresentations of $T_\ell$
that belong to the discrete series. 

For some realisations, a complication arises
because the term ${\mathscr B} f_{j-1}$ appearing 
in Equation (\ref{eigenvalueFormula}) does not belong to $V_{-1}$, so that the inner product with $v_0$ is not well-defined. We shall see that, in such cases, a natural regularisation
may be used to give a precise meaning to Equation (\ref{eigenvalueFormula}). By using this machinery, we can in principle determine the successive terms in the expansions for 
$\lambda$. In particular, this translates into expressions for the growth rate $\gamma$ of the product and the variance 
$\sigma^2$ of its fluctuations via the formulae
\begin{equation}
\gamma = -\frac{\lambda_1}{2} \;\;\text{and}\;\;\sigma^2 = \frac{\lambda_1^2}{4} - \frac{\lambda_2}{2}\,.
\label{growthRateAndVarianceInTermsOfLambda}
\end{equation}
We remark that this formula
for $\gamma$ is much simpler than the the usual Furstenberg formula, which also involves an integral over the group \cite{Fu}. This simplification is of course specific to the probability
distributions on the group that we consider in this paper.

\subsection{Outline of the remainder of the paper}

\S \ref{circleSection} is devoted to the realisation associated with the unit circle of equation
$$
x_1^2 + x_2^2 = 1
$$ 
in which the representation
space is in fact independent of $\ell$ and consists of smooth $2 \pi$-periodic functions of a real variable. We shall use this realisation to develop the theoretical framework in greater detail, with the aim of formulating the spectral problem for the transfer operator
in concrete terms.  By going over
to Fourier series, one can diagonalise with respect to the subgoup $K$ and this results in a spectral problem involving a difference operator. 
As a practical application, we study in 
\S \ref{KNPlusSection} the growth and fluctuations of products of the type $k \,n_+$
where the components are independent and the $n_+$ component is exponentially distributed.
In \S \ref{lineSection}, we consider a realisation associated with the line
of equation
$$
x_2 = 1\,.
$$
It is that realisation that is associated with the Riccati analysis
mentioned earlier. The representation
space then consists of certain smooth functions on the projective line, whose
behaviour at infinity depends on the index $\ell$. By going over to the Fourier transform, we achieve a representation 
diagonal with respect to the subgroup $N_-$. 
In this realisation, the spectral problem involves 
a differential operator. 
In \S \ref{NminusSection}, we study the growth and fluctuations of products involving
the subgroup $N_-$.  We examine products of the type 
$n_- \,k$, as such products correspond to the model considered by Frisch \& Lloyd.
We then consider products  of the type $n_- \,n_+$ associated with Dyson's model.
We also take the opportunity to discuss--- albeit briefly---
the existence of the invariant measure.
In \S \ref{hyperbolaSection}, we discuss a realisation 
relevant to disordered systems of ``supersymmetric'' type; it is associated with the hyperbola
of equation
$$
x_2 = \frac{1}{|x_1|}\,.
$$
Apart from a universal
factor, the functions in its representation space are essentially the same as for the realisation on the line.
By using the Mellin transform, we can diagonalise with respect to the subgroup $A_1$, and formulate the spectral problem in terms of a difference operator analogous to those discussed recently
by Neretin \cite{Ne}. It turns out, however, that the study of this operator is not entirely straightforward, and we shall indicate the nature of the difficulties.
Finally, in \S \ref{conclusionSection}, we discuss the possible extension of these ideas to other semi-simple groups.

\begin{figure}[!ht]
\centering
\begin{picture}(0,0)%
\includegraphics{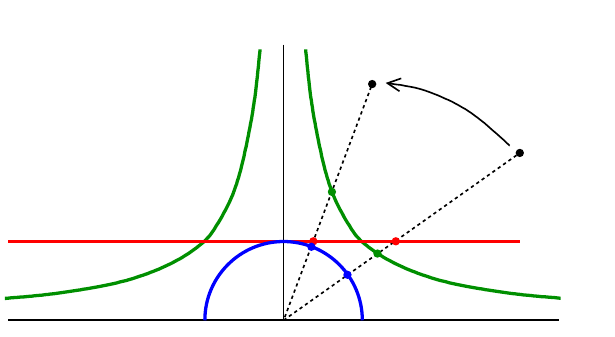}%
\end{picture}%
\setlength{\unitlength}{4144sp}%
\begingroup\makeatletter\ifx\SetFigFont\undefined%
\gdef\SetFigFont#1#2#3#4#5{%
  \reset@font\fontsize{#1}{#2pt}%
  \fontfamily{#3}\fontseries{#4}\fontshape{#5}%
  \selectfont}%
\fi\endgroup%
\begin{picture}(2803,1534)(235,-1757)
\put(2182,-788){\makebox(0,0)[lb]{\smash{{\SetFigFont{12}{14.4}{\rmdefault}{\mddefault}{\updefault}{\color[rgb]{0,0,0}$g$}%
}}}}
\put(2660,-966){\makebox(0,0)[lb]{\smash{{\SetFigFont{10}{12.0}{\rmdefault}{\mddefault}{\updefault}{\color[rgb]{0,0,0}$(x_1,x_2)$}%
}}}}
\put(1156,-760){\makebox(0,0)[lb]{\smash{{\SetFigFont{12}{14.4}{\rmdefault}{\mddefault}{\updefault}{\color[rgb]{0,.56,0}\S6}%
}}}}
\put(1237,-1620){\makebox(0,0)[lb]{\smash{{\SetFigFont{12}{14.4}{\rmdefault}{\mddefault}{\updefault}{\color[rgb]{0,0,1}\S2}%
}}}}
\put(2634,-1378){\makebox(0,0)[lb]{\smash{{\SetFigFont{12}{14.4}{\rmdefault}{\mddefault}{\updefault}{\color[rgb]{1,0,0}\S4}%
}}}}
\end{picture}%
\caption{Curves in the punctured plane associated with various realisations
of $T_\ell$, and the section of the paper in which each is discussed. The (semi-) circle, the horizonal line and the two-branched hyperbola lead to realisations diagonal with respect to the subgroups $K$, $N_-$ and $A_1$ respectively.}
\label{fig:overview}
\end{figure}

The reader should be aware that the material presented in \S\S \ref{circleSection},  \ref{lineSection} and \ref{hyperbolaSection} is largely a distillation of that in \cite{Vi,VK1}, adapted
to our particular purpose.

\section{Realisation on the circle}
\label{circleSection}
In this section, we shall obtain and work with a realisation of $T_\ell$
that is diagonal with respect to the subgroup $K$. Although this realisation
is admittedly of limited interest from the point of view of disordered systems, 
it has the advantage of affording the most natural setting 
in which to develop the technical details of our approach.

Consider the half-circle of equation
$$
x_1^2 + x_2^2 =1\,,\;\;x_2 > 0\,.
$$ 
Using
the polar decomposition
$$
{\mathbf x} = r \begin{pmatrix} \cos \theta & \sin \theta \end{pmatrix}\,,\;\; 0 \le \theta \le \pi\,,
$$
we see that, for every $v \in V_\ell$, homogeneity implies
$$
v({\mathbf x}) = r^{2 \ell} v(\cos \theta, \sin \theta) 
$$
whilst evenness implies
$$
v( \cos ( \theta + \pi),\sin (\theta+\pi)) = v(\cos \theta, \sin \theta)\,.
$$
Thus, $v$ is completely determined by a smooth $\pi$-periodic function. Conversely, to any smooth $\pi$-periodic function, say $\widetilde{v}$, we can associate the function $v :\,{\mathbb R}_\ast^2 \rightarrow {\mathbb C}$ defined by
$$
v({\mathbf x}) = r^{2\ell} \widetilde{v} (\theta)
$$
where $r$ and $\theta$ are the polar coordinates of ${\mathbf x}$. Obviously, the function $v$ defined in this way is smooth, even and homogeneous of degree $2 \ell$.
It will be convenient for esthetic 
reasons to rescale the angle so that we associate with $v$ --- not a $\pi$-periodic function--- but rather a $2 \pi$-periodic function
$\widetilde{v}$ defined by
$$
\widetilde{v}(\theta) = v \left ( \cos \frac{\theta}{2}, \sin \frac{\theta}{2} \right )\,,\;\;0 \le \theta \le 2 \pi\,.
$$

We denote the map $v \mapsto \widetilde{v}$ by $Q$ and write
\begin{equation}
\widetilde{V}_{\ell} = \left \{ Q v :\, v \in V_\ell \right \}\,.
\label{realisationSpace}
\end{equation}
For example, the function $1_\ell \in V_\ell$, defined by Equation (\ref{unitFunction}), is mapped by $Q$ to the
$2 \pi$-periodic function that is identically
equal to $1$--- hence the notation. 

\begin{proposition}
For a function $\widetilde{v} : [0, 2 \pi) \rightarrow {\mathbb C}$ to belong to $\widetilde{V}_\ell$, it is necessary
and sufficient that its $2 \pi$-periodic extension to the real line be smooth.
\label{circleProposition}
\end{proposition}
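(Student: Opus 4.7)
The plan is to verify the two implications separately, treating the map $Q$ explicitly in polar coordinates and using the factor of $1/2$ in the angle to convert between evenness on the punctured plane and $2\pi$-periodicity on the line.

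For \emph{necessity}, suppose $\widetilde v = Q v$ for some $v \in V_\ell$. The map $\theta \mapsto (\cos(\theta/2),\sin(\theta/2))$ is $C^\infty$ into $\mathbb{R}_*^2$, so $\widetilde v(\theta) = v(\cos(\theta/2),\sin(\theta/2))$ is smooth on $\mathbb{R}$ by the chain rule. Periodicity of the extension follows from evenness of $v$:
\[
\widetilde v(\theta+2\pi) = v\!\left(\cos\tfrac{\theta+2\pi}{2},\sin\tfrac{\theta+2\pi}{2}\right) = v\!\left(-\cos\tfrac{\theta}{2},-\sin\tfrac{\theta}{2}\right) = \widetilde v(\theta).
\]

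For \emph{sufficiency}, suppose $\widetilde v$ has a smooth $2\pi$-periodic extension. I would define the candidate preimage by
\[
v(\mathbf{x}) := r^{2\ell}\,\widetilde v(2\phi), \qquad \mathbf{x} = r(\cos\phi,\sin\phi),
\]
where $r^{2\ell}$ is interpreted as $(x_1^2+x_2^2)^\ell$. Well-definedness as a function on $\mathbb{R}_*^2$ (i.e.\ independence of the choice of $\phi \mod 2\pi$) is automatic since $\widetilde v(2\phi+4\pi)=\widetilde v(2\phi)$. Homogeneity of degree $2\ell$ is immediate from the formula, and evenness follows again from $2\pi$-periodicity: if $\mathbf{x}$ has polar angle $\phi$, then $-\mathbf{x}$ has polar angle $\phi+\pi$, and $\widetilde v(2\phi+2\pi) = \widetilde v(2\phi)$. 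Finally, $Qv = \widetilde v$ by construction, since setting $r=1$ and $\phi = \theta/2$ recovers the definition of $Q$.

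The key technical step, and the only non-routine point, is the smoothness of $v$ on $\mathbb{R}_*^2$. The subtlety is that although $r$ and $\widetilde v(2\phi)$ are each smooth on any local chart that avoids a branch cut of $\phi$, the product must be shown to extend smoothly across such cuts. I would argue this via Fourier expansion: by smoothness, $\widetilde v(\theta) = \sum_{n\in\mathbb{Z}} c_n e^{\i n\theta}$ with coefficients $(c_n)$ of rapid decay, so
\[
v(\mathbf{x}) \;=\; \sum_{n\in\mathbb{Z}} c_n\,r^{2\ell}\,e^{2\i n\phi}.
\]
Using $r e^{\i\phi} = x_1 + \i x_2$, each mode rewrites as
\[
r^{2\ell}\,e^{2\i n\phi} = (x_1^2+x_2^2)^{\ell-n}(x_1+\i x_2)^{2n} \quad (n\ge 0),
\]
and similarly with $(x_1-\i x_2)^{2|n|}$ for $n<0$. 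Each such expression is manifestly $C^\infty$ on $\mathbb{R}_*^2$, and the rapid decay of $c_n$ (together with the polynomial growth of derivatives of each mode on compact subsets of $\mathbb{R}_*^2$) allows termwise differentiation, yielding $v \in C^\infty(\mathbb{R}_*^2)$. Hence $v \in V_\ell$, completing the proof. The main obstacle I anticipate is precisely this smoothness check, since it requires care about how the angular variable $\phi$ is glued across its period; the Fourier route bypasses this by reducing to a sum of globally defined smooth functions on the punctured plane.
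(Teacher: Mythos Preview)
Your argument is correct. Note, however, that the paper does not actually supply a proof of this proposition: the statement is preceded only by an informal paragraph which defines the inverse map and then asserts that ``obviously, the function $v$ defined in this way is smooth, even and homogeneous of degree $2\ell$.'' Your write-up therefore goes well beyond what the paper provides, and in particular your Fourier-series argument for smoothness is a genuine addition rather than a reconstruction of an existing proof.

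That said, the Fourier route is heavier than necessary. A more direct justification of smoothness (and presumably what the authors had in mind by ``obviously'') is purely local: any point of ${\mathbb R}_\ast^2$ has a neighbourhood on which one can choose a smooth branch of the polar angle $\phi$, and on that neighbourhood $v = (x_1^2+x_2^2)^{\ell}\,\widetilde v(2\phi)$ is a composition of smooth maps. Well-definedness across overlapping charts is exactly the $2\pi$-periodicity of $\widetilde v$, which you already checked. This avoids the need to control the growth in $n$ of the derivatives of each Fourier mode. Your approach does have the advantage of making explicit the link between smoothness of $\widetilde v$ and rapid decay of its Fourier coefficients, which the paper uses later when identifying $V_\ell$ with the space of rapidly decreasing sequences; so the extra work is not wasted.
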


The map $Q$ from $V_\ell$ to $\widetilde{V}_\ell$ is bijective. 
We may then realise the representation $T_\ell$ on the space of smooth $2\pi$-periodic functions as follows:
\begin{equation}
\widetilde{T}_\ell(g)  \,Q  = Q \,T_\ell(g) \,.
\label{intertwinning}
\end{equation}
More explicitly,
\begin{multline}
\left [ \widetilde{T}_\ell(g) \widetilde{v} \right ] (\theta) := \left [ T_\ell(g) v \right ] \left ( \cos \frac{\theta}{2}, \sin \frac{\theta}{2} \right ) \\
= v \left ( a \cos \frac{\theta}{2} + c \sin \frac{\theta}{2}, \,b \cos \frac{\theta}{2} + d \sin \frac{\theta}{2} \right ) 
= \left | \begin{pmatrix} \cos \frac{\theta}{2} & \sin \frac{\theta}{2} \end{pmatrix} g \right |^{2 \ell}\,
 \widetilde{v}(\theta \cdot g)
\label{circleRepresentation}
\end{multline}
where $\theta \cdot g$, defined implicitly by
\begin{equation}
\cot \frac{\theta \cdot g}{2} := \frac{a \cos \frac{\theta}{2} + c \sin \frac{\theta}{2}}{b \cos \frac{\theta}{2}+d \sin \frac{\theta}{2}}\,,
\label{actionOnTheCircle}
\end{equation}
describes the action of the group on the circle.
We shall henceforth refer to this as the {\em realisation on the circle}
and drop the tilde. The corresponding infinitesimal generators are listed in the third column of Table \ref{infinitesimalGeneratorTable}.

\subsection{Multipliers}
There is an alternative interpretation of this (and other) realisatisation(s).
We observe that the measure $d \theta$ is invariant under the action of the subgroup $K$. 
For a fixed $g \in G$, the Jacobian of the map $\theta \mapsto \theta \cdot g$ is easily computed:
\begin{equation}
\frac{d}{d \theta} \left (  \theta \cdot g \right ) = \frac{1}{\left (a \cos \frac{\theta}{2} + c \sin \frac{\theta}{2} \right )^2 + \left (b \cos \frac{\theta}{2} + d \sin \frac{\theta}{2} \right )^2}
=: \sigma_K (\theta,g)\,.
\label{circleMultiplier}
\end{equation}
We may therefore express the operator $T_\ell(g)$ as
\begin{equation}
\left [ T_\ell(g) v \right ] (\theta) = \sigma_K(\theta,g)^{-\ell} \,v ( \theta \cdot g )\,.
\label{circleRealisationWithMultiplier}
\end{equation}
Now, since the map
$\sigma_K : \,[0,2\pi) \times G \rightarrow {\mathbb R}_+$  so defined is a Jacobian, it follows easily from the chain rule for differentiation that it has the so-called {\em multiplicative cocycle}
or {\em multiplier} property :
\begin{equation}
\forall\, \theta \in [0,2 \pi)\,,\;\;\forall\, g_1,\,g_2 \in G\,,\;\; \sigma_K ( \theta, g_1 g_2) = \sigma_K (\theta \cdot g_1,g_2) \,\sigma_K (\theta,g_1)\,.
\label{multiplicativeCocycleForTheCircle}
\end{equation}
Using this property, we may then verify directly, without reference to the realisation on the punctured plane, that the realisation on the circle is indeed a representation. This  formulation in terms of multipliers accounts for the fact, noted in \S \ref{generatorSubsection}, that the infinitesimal generators are affine functions of the index $\ell$. We note, for future reference, another easy consequence of the multiplier property:
\begin{equation}
1 = \sigma_K (\theta, e) = \sigma_K (\theta,g g^{-1}) = \sigma_K ( \theta \cdot g,g^{-1}) \,\sigma_K (\theta,g)\,.
\label{anotherMultiplierProperty}
\end{equation}

\subsection{Hilbert space formulation}
The monomials
\begin{equation}
{{\mathbf e}_{\ell,n}}(\theta) := e^{\i n \theta}
\label{circleBasis}
\end{equation}
form an obvious basis for $V_\ell$, orthonormal with respect to the inner product
\begin{equation}
\braket{f |\, v} := \frac{1}{2 \pi} \int_{0}^{2 \pi} \overline{f(\theta)} \,v(\theta)\,d \theta\,.
\label{innerProductOnTheCircle}
\end{equation}
We denote by $H$ the Hilbert space obtained from $V_\ell$ by completion
with respect to the norm induced by this inner product; it coincides with the familiar Hilbert space 
of square integrable functions on the circle, and we call the numbers 
$\braket{{\mathbf e}_{\ell,n} |\, v}$
the {\em Fourier coefficients} of $v$.
$V_\ell$ is then the subspace of $H$ consisting of functions $v$
whose Fourier coefficients are {\em rapidly decreasing}, i.e.
$$
\forall\, m \in {\mathbb N}\,,\;\;n^m \braket{{\mathbf e}_{\ell,n} | \,v} \xrightarrow[n \rightarrow \infty]{} 0\,.
$$

\subsection{The dual space and the adjoint}
It is possible to define a topology on $V_\ell$ so that one may speak of continuous linear functionals on $V_\ell$ \cite{LN}. 
The space $V_\ell^\ast$ of these functionals is the {\em dual} of $V_\ell$ and may be identified with the set of functions $f$ whose Fourier coefficients are
{\em slowly increasing}, i.e.
$$
\exists \, m \in {\mathbb N}\,,\; \exists\,c >0\;\;\text{such that}\;\;\forall\, n \in {\mathbb N}\,,\;\;
\left | \braket{f |\, {\mathbf e}_{\ell,n}} \right | \le c \,n^m\,.
$$
Fourier series with such coefficients always converge in the sense of generalised functions; see \cite{GS}.
Clearly,
$$
V_\ell \subset H \subset V_\ell^\ast\,.
$$
With some abuse of notation, we may then write $\braket{f |\, v}$ to denote also the value that the continuous linear functional
takes when it is evaluated at $v \in V_\ell$.

Let $A$ be a linear operator on $V_\ell$. Its {\em adjoint} $A^\ast$ is a linear operator on $V_\ell^\ast$ that assigns to $f \in V_\ell^\ast$
the linear functional $A^\ast f$ defined via
\begin{equation}
\forall\,v \in V_\ell\,,\;\;\braket{A^\ast f |\, v}= \braket{f |\, A v} \,.
\label{adjointDefinition}
\end{equation}
Let us work out the adjoint of $T_\ell(g)$. For the realisation on the circle, we have
$$
\braket{f |\, T_\ell(g) v} = \frac{1}{2 \pi} \int_0^{2 \pi} \overline{f (\theta)}\, \sigma_K(\theta,g)^{-\ell} \,v (\theta \cdot g)\,d \theta
$$
where $\sigma_K$ is the multiplier defined by (\ref{circleMultiplier}). Hence, the change of variable $\widetilde{\theta} = \theta \cdot g$ yields
\begin{multline}
\braket{f | \, T_\ell(g) v} = \frac{1}{2 \pi} \int_0^{2 \pi} \overline{f (\widetilde{\theta} \cdot g^{-1})}\, \sigma_K(\theta,g)^{-\ell-1} \,v (\widetilde{\theta})\,d \widetilde{\theta} \\
= \frac{1}{2 \pi} \int_0^{2 \pi} \overline{f (\widetilde{\theta} \cdot g^{-1})}\, \sigma_K(\widetilde{\theta},g^{-1})^{\ell+1} \,v (\widetilde{\theta})\,d \widetilde{\theta} \,.
\notag
\end{multline}
where we have used the definition (\ref{circleMultiplier}) of $\sigma_K$ and its property (\ref{anotherMultiplierProperty}). We readily deduce the formula
\begin{multline}
\left [ T_\ell^\ast (g) f \right ] (\theta) = \sigma_K (\theta,g^{-1})^{-\ell^\ast} \,f (\theta \cdot g^{-1}) \\
= \left [ \left ( d \cos \frac{\theta}{2} - c\sin \frac{\theta}{2} \right )^2 + \left ( -b \cos \frac{\theta}{2} +a \sin \frac{\theta}{2} \right )^2\right ]^{\ell^\ast}
\, f \left ( \theta \cdot g^{-1} \right )
\label{adjointOnTheCircle}
\end{multline}
where 
$$
\cot \frac{\theta \cdot g^{-1}}{2} = \frac{d \cos \frac{\theta}{2} - b \sin \frac{\theta}{2}}{a \cos \frac{\theta}{2} - c \sin \frac{\theta}{2}}
$$
and $\ell^\ast$ is the index (\ref{ellStar}).

It follows from this calculation that
\begin{equation}
T_{\ell}^\ast (g)= T_{\ell^\ast}(g^{-1}) \,.
\label{adjointProperty}
\end{equation}
In particular, we deduce that $V_{\ell^\ast}$ is a subspace of $V_\ell^\ast$, invariant
for the adjoint $T_\ell^\ast(g)$.
Knowing the infinitesimal generators, we can easily construct the adjoint by using the
\begin{proposition}
Let $S = S(\ell)$ be an infinitesimal generator of ${T}_\ell$ corresponding to
some one-parameter subgroup. Then its adjoint is given by the formula
$$
S(\ell)^{\ast} = - S (\ell^\ast)\;\;\text{where}\;\;\ell^\ast = -\overline{\ell}-1\,.
$$
\label{firstAdjointProposition}
\end{proposition}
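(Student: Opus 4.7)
The plan is to deduce the proposition directly from Equation (\ref{adjointProperty}) by differentiating both sides along a one-parameter subgroup. Let $t \mapsto g(t)$ be any one-parameter subgroup of $G$ with $g(0)=e$, and let $S(\ell)$ be the associated infinitesimal generator of $T_\ell$, so that $S(\ell)v = \frac{d}{dt}[T_\ell(g(t))v]\bigl|_{t=0}$ for $v \in V_\ell$.

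The first step is to pair with an element of $V_{\ell^\ast}$ and interchange the derivative with the duality bracket $\braket{\cdot|\cdot}$. For $f \in V_{\ell^\ast}\subset V_\ell^\ast$ and $v \in V_\ell$, the definition (\ref{adjointDefinition}) of the adjoint combined with Equation (\ref{adjointProperty}) gives
\begin{equation}
\braket{f|S(\ell)v} = \frac{d}{dt}\braket{f | T_\ell(g(t))v}\Bigl|_{t=0} = \frac{d}{dt}\braket{T_{\ell^\ast}(g(t)^{-1})f\,|\,v}\Bigl|_{t=0}.
\notag
\end{equation}
Since $t \mapsto g(t)$ is a one-parameter subgroup, $g(t)^{-1}=g(-t)$, so the right-hand side equals $\braket{-S(\ell^\ast)f\,|\,v}$ by the chain rule. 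As this holds for all $v \in V_\ell$, one concludes $S(\ell)^\ast f = -S(\ell^\ast)f$ on $V_{\ell^\ast}$.

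The step that requires a modicum of care is the interchange of differentiation and the pairing $\braket{\cdot|\cdot}$, together with the chain rule applied inside that pairing. In the concrete realisation on the circle developed in the present section, both sides are Fourier series with rapidly decreasing or slowly increasing coefficients (see the discussion preceding Equation (\ref{adjointDefinition})), and the operators $T_\ell(g(t))$ depend smoothly on $t$ coefficient by coefficient; the interchange is then entirely routine and the same holds for the other realisations we shall meet. This is the main---and only---technical point in the argument; no case-by-case inspection of the generators listed in Table \ref{infinitesimalGeneratorTable} is required, although one may of course check the identity $S(\ell)^\ast = -S(-\overline{\ell}-1)$ column by column as a sanity test.
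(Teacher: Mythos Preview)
Your proof is correct and is precisely the argument the paper leaves implicit: the proposition is stated immediately after Equation~(\ref{adjointProperty}) as an easy consequence of it, with no separate proof given, and your differentiation of $T_\ell^\ast(g(t))=T_{\ell^\ast}(g(-t))$ at $t=0$ is exactly the intended one-line derivation. There is nothing to add.
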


Another consequence of the formula for the adjoint is as follows:
The spectrum of ${\mathscr T}_{\ell}$ is the complex conjugate of the spectrum of its adjoint
$$
{\mathscr T}_{\ell}^\ast = {\mathbb E} \left [ T_\ell^\ast (g) \right ]  = {\mathbb E} \left [ T_{\ell^\ast} (g^{-1}) \right ]\,. 
$$
From the definition of $\Lambda_{\ell}$, we see that the largest eigenvalue of this last operator is also the largest eigenvalue of the operator
$$
{\mathbb E} \left [ T_{\ell^\ast} ( g^{-t}) \right ] \,.
$$
But this operator is similar to ${\mathscr T}_{\ell^\ast}$ because, for every $g \in G$,
$$
\begin{pmatrix}
0 & 1 \\
-1 & 0
\end{pmatrix} g = g^{-t}  \begin{pmatrix}
0 & 1 \\
-1 & 0
\end{pmatrix}\,.
$$
We deduce 
\begin{equation}
\Lambda (2 \ell) = \overline{\Lambda(2 \ell^\ast)} = \Lambda \left (\overline{2 \ell^\ast} \right ) = \Lambda (-2\ell-2)\,.
\label{symmetryProperty}
\end{equation}
This well-known symmetry property of the generalised Lyapunov exponent--- see \cite{Va}, Proposition 2--- is consistent with the fact that the representations
$T_\ell$ and $T_{-\ell-1}$ are equivalent.


\subsection{The solvable case $\ell \in {\mathbb Z}$}
\label{discreteSection}
By virtue of the symmetry property (\ref{symmetryProperty}), we need only consider the
range
$$
\ell \in \left \{ 0, 1,\,\ldots \right \}\,.
$$
We can use the 
formulae (\ref{eigenvectorsOfJ0}-\ref{raisingLowering}) to show  
that $V_\ell^0$, defined by Equation (\ref{finiteDimensionalInvariantSubspace}),
and
\begin{equation}
\notag
\underset{\pm n \ge -\ell}{\text{span}} \{ {{\mathbf e}_{\ell,n}} \}\,.
\end{equation}
are subspaces invariant under $T_\ell$. 
The finite-dimensional subspace $V_\ell^0$
contains no proper invariant subspace. Hence the subrepresentation $T_\ell^0$ obtained
by restricting $T_\ell$
to $V_\ell^0$ is irreducible.
\begin{proposition}
For $\ell \in {\mathbb N}$, the generalised Lyapunov exponent $\Lambda(2 \ell)$ is the logarithm of the largest
eigenvalue of the $(2\ell+1) \times (2 \ell+1)$ matrix with entries 
\begin{equation}
\braket{{\mathbf e}_{\ell^\ast, m} | {\mathscr T}_\ell \, {\mathbf e}_{\ell,n}}\,,\;\;-\ell \le m,\,n \le \ell\,.
\label{finiteDimensionalSubrepresentation}
\end{equation}
\label{solvabilityProposition}
\end{proposition}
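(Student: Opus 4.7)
The plan is to exploit the fact that $V_\ell^0$ is a finite-dimensional invariant subspace and that the seed function $1_\ell$ happens to lie inside it. First I would argue that $V_\ell^0$, being invariant under $T_\ell(g)$ for every $g \in G$ (as noted in \S\ref{finiteDimensionalSubsection}), is also invariant under the averaged operator $\mathscr{T}_\ell = \mathbb{E}[T_\ell(g)]$. By Equation (\ref{unitFunction}) and the identification of $1_\ell$ as a multiple of ${\mathbf e}_{\ell,0}$ pointed out in \S\ref{generatorSubsection}, we have $1_\ell \in V_\ell^0$. Consequently, iterating the representation formula (\ref{representationFormula}) gives $[\mathscr{T}_\ell^n\,1_\ell]({\mathbf x}) = \mathbb{E}(|{\mathbf x}\,\Pi_n|^{2\ell})$, a function lying in $V_\ell^0$ for every $n$; everything then reduces to studying the $(2\ell+1)$-dimensional restriction $\mathscr{T}_\ell\big|_{V_\ell^0}$.

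Next I would use the bi-orthogonal system $\{{\mathbf e}_{\ell^\ast,m}\}\cup\{{\mathbf e}_{\ell,n}\}$ introduced in \S\ref{dysonSchmidtSubsection} to read off a matrix representation. Writing each $v \in V_\ell^0$ as $v = \sum_{n=-\ell}^{\ell} \braket{{\mathbf e}_{\ell^\ast,n}|v}\,{\mathbf e}_{\ell,n}$, the matrix of $\mathscr{T}_\ell\big|_{V_\ell^0}$ in the basis $\{{\mathbf e}_{\ell,n}\}_{-\ell\le n\le\ell}$ has entries precisely $M_{mn} := \braket{{\mathbf e}_{\ell^\ast,m}|\,\mathscr{T}_\ell\,{\mathbf e}_{\ell,n}}$. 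Then I would use formula (\ref{generalisedLyapunovExponent}) to write
$$
\mathbb{E}\!\left(|{\mathbf x}\,\Pi_n|^{2\ell}\right) = \sum_{m=-\ell}^{\ell} {\mathbf e}_{\ell,m}({\mathbf x})\,[M^n \mathbf{c}]_m\,,
$$
where $\mathbf{c}$ is the coordinate vector of $1_\ell$. Standard spectral-radius asymptotics for $M^n$ then yield $\Lambda(2\ell) = \ln\rho(M)$, where $\rho(M)$ is the eigenvalue of largest modulus---provided that the seed $\mathbf{c}$ has a non-trivial projection onto the leading (generalised) eigenspace.

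The main obstacle is precisely this last non-degeneracy condition: ensuring that the dominant eigenvalue is genuinely the one selected by the iteration of $1_\ell$. I would handle this by a Perron--Frobenius style argument: the operator $T_\ell(g)$ acts by a positive multiplier on positive homogeneous functions (a form of Equation (\ref{circleRealisationWithMultiplier}) suitably adapted), so $\mathscr{T}_\ell$ preserves the cone of such functions and $1_\ell$ lies in its interior. This forces the largest eigenvalue to be real, simple, and to have an eigenvector paired non-trivially with~$1_\ell$. An alternative---and arguably cleaner---route is to exploit the known independence of the limit in (\ref{generalisedLyapunovExponent}) from the choice of ${\mathbf x}$ and norm: if the projection of $1_\ell$ onto the leading eigenspace were to vanish, the $n \to \infty$ rate would decrease, contradicting the existence of the $\mathbf{x}$-independent limit $\Lambda(2\ell)$ for generic $\mathbf{x}$. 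Either way, once the dominance is secured, the finite-dimensionality of $V_\ell^0$ reduces the computation of $\Lambda(2\ell)$ to the purely algebraic task of finding the spectral radius of the explicit $(2\ell+1)\times(2\ell+1)$ matrix $M$.
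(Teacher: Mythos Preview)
Your approach coincides with the paper's: restrict $\mathscr{T}_\ell$ to the finite-dimensional invariant subspace $V_\ell^0$, observe that $1_\ell = {\mathbf e}_{\ell,0}$ lies there, express the restriction as the matrix $M$ via the bi-orthogonal system, and read off the growth rate from $M^n$. The paper's own argument is in fact terser than yours and simply asserts that $\mathscr{T}_\ell^n\,1_\ell$ is asymptotic in magnitude to the $n$th power of the largest eigenvalue, without addressing the non-degeneracy issue you flag. Your Perron--Frobenius route is a genuine strengthening: in the circle realisation $V_\ell^0$ consists of trigonometric polynomials of degree at most $\ell$, the function $1_\ell$ is the strictly positive constant, and each $T_\ell(g)$ (hence $\mathscr{T}_\ell$) preserves the cone of non-negative such polynomials via (\ref{circleRealisationWithMultiplier}); the Krein--Rutman/Perron--Frobenius machinery then yields a simple dominant eigenvalue whose dual eigenvector is strictly positive and therefore pairs non-trivially with $1_\ell$. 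Your alternative argument via $\mathbf{x}$-independence, however, does not work as stated: the seed in (\ref{representationFormula}) is always $1_\ell$ regardless of $\mathbf{x}$, so if its projection onto the leading eigenspace vanished the resulting slower growth rate would hold uniformly in $\mathbf{x}$ and no contradiction with $\mathbf{x}$-independence would arise. The first argument already suffices.
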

\begin{proof}
The finite-dimensional
subrepresentation $T_\ell^0$ is expressible as a matrix of order $2\ell+1$ with entries
$$
\braket{{\mathbf e}_{\ell^\ast, m} | \,{T}_\ell (g) \,{\mathbf e}_{\ell,n}}\,,\;\;-\ell \le m,\,n \le \ell\,.
$$
After averaging it over the group, we obtain the matrix with the stated entries. It only remains
to observe that 
$$
1_\ell = \mathbf{e}_{\ell,0} \in V_\ell^0
$$
and so it may be expressed in terms of the eigenvectors (and associated eigenvectors) of the averaged matrix. In particular, 
${\mathscr T}_\ell^n 1_\ell$ will be asymptotic in magnitude to the largest eigenvalue, raised to the power $n$.
\end{proof}

Two further irreducible subrepresentations, denoted
$T_\ell^\pm$, are obtained
by restricting $T_\ell$ to the quotient spaces
\begin{equation}
\underset{\pm n \ge -\ell}{\text{span}} \{ {{\mathbf e}_{\ell,n}} \}/V_\ell^0\,.
\label{infiniteDimensionalInvariantSubspaces}
\end{equation}
Here, taking the quotient means that two elements
are considered identical if their difference belongs to $V_\ell^0$.
Thus an element in $V_\ell^\pm$ is completely determined
by its coefficients of index $\pm n > \ell$. 

The significance of the subspaces $V_\ell^{\pm}$ is as follows: For the realisation on the circle, we may identify $v \in V_\ell^+$ with an element of $V_\ell$ of the form
$$
v = \sum_{n > \ell} \braket{{\mathbf e}_{\ell,n} | v} e^{\i n \theta}\,.
$$ 
Since the coefficients $\braket{{\mathbf e}_{\ell,n} | v}$ are rapidly decreasing, we can assert that, 
for every $r \le 1$, the power series
$$
\sum_{n > \ell} \,\braket{{\mathbf e}_{\ell,n} | v} z^n\,,\;\;\text{with $z = r e^{\i \theta}$}\,,
$$
is convergent, and so the series is a function of the complex variable $z$, analytic inside the unit disk. We may therefore paraphrase our description of $V_\ell^+$
by saying that it consists of functions in  $V_\ell$, defined on the unit circle in the complex plane, that may be continued analytically inside the unit disk. Clearly, however, it does not contain all such functions.
Likewise, the subspace $V_\ell^-$ consists of functions that may be continued analytically {\em outside} the unit disk. For these reasons, we shall henceforth refer to $T_\ell^+$ and $T_\ell^-$
as the {\em holomorphic representations}. We will see in due course 
that they are {\em unitary}
with respect to a certain inner product.

\subsection{Diagonalisation with respect to $K$}
\label{discreteFourierSubsection}
It is straightforward to diagonalise $T_\ell$ with respect to the subgroup $K$ by working in the space
$$
\widehat{V}_\ell := \left \{ Q v : \, v \in V_\ell \right \}
$$
where $Q$ is the operator that assigns to $v \in V_\ell$ the sequence 
$\widehat{v} :\,{\mathbb Z} \rightarrow {\mathbb C}$ of its Fourier coefficients defined by
$$
\widehat{v}(n) := \braket{{\mathbf e}_{\ell,n} | \,v}\,.
$$ 
Thus, for instance,
$\widehat{{\mathbf e}}_{\ell,n}$ is the sequence 
defined by
$$
\widehat{{\mathbf e}}_{\ell,n} (m) = \delta_{m,n}\;\;\text{for every $m \in {\mathbb Z}$}\,.
$$

As mentioned already, $\widehat{V}_\ell$ is the space of rapidly decreasing sequences,
and its dual $\widehat{V}_\ell^\ast$ is the space of slowly increasing sequences. By virtue of
the Parseval formula, it follows from the definition (\ref{innerProductOnTheCircle}) of the inner product in $H$ that the duality
can be expressed via the positive-definite Hermitian form
$$
\braket{\widehat{f} \,|\, \widehat{v}} := \sum_{n \in {\mathbb Z}} \overline{\widehat{f}(n)}\,{\widehat{v}(n)}\,.
$$
The corresponding Hilbert space consists of the square summable sequences.
The $\widehat{{\mathbf e}}_{\ell^\ast,m}$ and  $\widehat{{\mathbf e}}_{\ell,n}$ form a bi-orthogonal system with respect to this inner product, 
but since,  for the realisation on the circle, there holds ${\mathbf e}_{\ell^\ast,n} = {\mathbf e}_{\ell,n}$, we have in fact an orthonormal basis.

We obtain a realisation--- denoted $\widehat{T}_\ell$--- of $T_\ell$ in
$\widehat{V}_\ell$
via
$$
\widehat{T}_\ell \,Q = Q \,T_\ell
$$
or, equivalently,
$$
\left [ \widehat{T}_\ell (g) \,\widehat{v} \right ] (n) = \braket{\widehat{\mathbf e}_{\ell^\ast,n} | \,T_\ell (g) v}\,.
$$
In this realisation, for every $g \in G$, $\widehat{T}_\ell(g)$ is a ``difference operator'' on the space of rapidly decreasing sequences; it may be identified with the infinite matrix with entry
$$
\braket{{\mathbf e}_{\ell^\ast,m} | \,{T}_\ell(g) \,{\mathbf e}_{\ell,n}} 
$$
in the $m$th row and $n$th column.
As an illustration, when 
$$
g = k(t) = \begin{pmatrix}
\cos \frac{t}{2} & \sin \frac{t}{2} \\
-\sin \frac{t}{2}. & \cos \frac{t}{2}
\end{pmatrix}
\in K,
$$
we find
$$
\left [ \widehat{T}_\ell (g) \,\widehat{v} \right ] (n) = e^{-\i n t} \,\widehat{v}(n)
$$
so that, for $g \in K$, $\widehat{T}_{\ell} (g)$ takes the form of a diagonal matrix. The corresponding infinitesimal generator
is the difference operator $\widehat{K}$ defined by
$$
\left ( \widehat{K}\, \widehat{v} \right ) (n) = -\i n \,\widehat{v}(n)\,.
$$

We call this the {\em realisation in discrete Fourier space}.
As with other realisations, when there is no risk of confusion, we shall drop the hat.
The infinitesimal generators are listed
in Table \ref{infinitesimalGeneratorInDiscreteFourierSpaceTable}.

\begin{table}
\begin{tabular}{c}
\hline
\\ 
$\left ( {K} v \right ) (n) = -\i n\, v (n)$ \\
 \\
$\left ( {A}_1 v \right ) (n) = \frac{1}{2} \left ( \ell +n+1 \right ) v ({n+1})+ \frac{1}{2} \left ( \ell -n+1 \right ) v({n-1})$ \\
\\
$\left ( {A}_2 v \right ) (n) = \frac{\i}{2} \left ( \ell+n+1 \right ) v ({n+1}) - \frac{\i}{2} \left (\ell-n+1 \right )  v({n-1})$ \\
\\
 $\left ( {N}_+ v \right ) (n) = \frac{\i}{2} \left ( \ell +n+1 \right ) v ({n+1})+ \i n \,v(n) - \frac{\i}{2}\left ( \ell-n+1\right ) v({n-1})$ \\
  \\
 $\left ( {N}_- v \right ) (n) = \frac{\i}{2} \left ( \ell + n+1 \right ) v({n+1}) - \i n \,v(n)- \frac{\i}{2} \left ( \ell -n+1 \right ) v({n-1})$ \\
 $ $ \\
 \hline
\\[0.125cm]
\end{tabular}
\caption{Infinitesimal generators associated with the realisation of $T_\ell$ in discrete Fourier space.}
\label{infinitesimalGeneratorInDiscreteFourierSpaceTable}
\end{table}

\begin{proposition}
For $\ell \in {\mathbb N}$, the realisation of $T_\ell^{\pm}$ in discrete Fourier space is unitary
with respect to the inner product
$$
\braket{f | v}_{\pm} := \sum_{\pm n > \ell} \frac{\Gamma(\ell+1 \pm n)}{\Gamma (-\ell \pm n)} \,\overline{f(n)}\, v(n)\,.
$$
\label{discreteHolomorphicProposition}
\end{proposition}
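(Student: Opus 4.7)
My plan is to reduce unitarity of the group representation to a Lie-algebra statement: since $\mathrm{SL}(2,\mathbb{R})$ is connected and generated by its one-parameter subgroups, $T_\ell^\pm$ is unitary with respect to $\braket{\cdot|\cdot}_\pm$ if and only if every infinitesimal generator $S$ is skew-adjoint with respect to that form, i.e.\ $S^\ast = -S$. Rather than work with $K$, $A_1$, $A_2$, $N_\pm$ directly, I would pass to the complex basis $\{J_0, J_+, J_-\}$ of Eq.~(\ref{jOperators}), because its action on $\{\mathbf{e}_{\ell,n}\}$ is explicit (Eqs.~(\ref{eigenvectorsOfJ0})--(\ref{raisingLowering})). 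Skew-adjointness of the real generators then translates into the two conditions $J_0^\ast = J_0$ and $J_+^\ast = -J_-$.

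First I would verify that the putative inner product is well-defined on the quotient $V_\ell^\pm$. Set $w_+(n) := \Gamma(\ell+1+n)/\Gamma(n-\ell)$. For $\ell\in\mathbb{N}$ and $n\in\{-\ell,\dots,\ell\}$, the denominator has a pole whereas the numerator is finite, so $w_+(n)=0$; consequently every element of $V_\ell^0$ has zero seminorm, and the form descends unambiguously to $V_\ell^+$. For $n>\ell$, $w_+(n)=(\ell+n)!/(n-\ell-1)!>0$, so the form is positive definite there. Positivity and well-definedness of $\braket{\cdot|\cdot}_-$ on $V_\ell^-$ follows by the substitution $n\mapsto -n$.

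Next I would verify the two skew-adjointness identities. The equality $J_0^\ast=J_0$ is immediate since $J_0\mathbf{e}_{\ell,n}=n\mathbf{e}_{\ell,n}$ with $n$ real and the basis orthogonal. For $J_+^\ast=-J_-$, the bi-orthogonality of $\{\mathbf{e}_{\ell,n}\}$ reduces the claim to a single matching of coefficients: for every $n>\ell$,
\[
\braket{J_+\mathbf{e}_{\ell,n}\,|\,\mathbf{e}_{\ell,n+1}}_+ \;=\; (\ell-n)\,w_+(n+1) \;\stackrel{?}{=}\; -(\ell+n+1)\,w_+(n) \;=\; -\braket{\mathbf{e}_{\ell,n}\,|\,J_-\mathbf{e}_{\ell,n+1}}_+.
\]
Using $\Gamma(x+1)=x\Gamma(x)$ one finds $w_+(n+1)/w_+(n)=(\ell+n+1)/(n-\ell)$, which turns the desired equality into the trivial identity $-1=-1$. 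The boundary case $n=\ell+1$ requires separate inspection: $J_-\mathbf{e}_{\ell,\ell+1}=(2\ell+1)\mathbf{e}_{\ell,\ell}\in V_\ell^0$, hence is zero in the quotient $V_\ell^+$, and this matches the vanishing of $w_+(\ell)$ on the right-hand side. The $V_\ell^-$ case is entirely parallel, with weight $w_-(n)=\Gamma(\ell+1-n)/\Gamma(-\ell-n)$ and boundary at $n=-\ell-1$ where $J_+\mathbf{e}_{\ell,-\ell-1}\in V_\ell^0$.

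The main obstacle is not analytic but combinatorial bookkeeping: ensuring that the poles and zeros of the gamma functions defining the weights conspire correctly at the boundaries $n=\pm\ell$, precisely where the holomorphic subrepresentations glue onto the finite-dimensional piece $T_\ell^0$ that is being quotiented out. Once this boundary behaviour is matched—so that both the action of the lowering operator and the weight vanish simultaneously—the verification of skew-adjointness is a one-line consequence of the gamma function recursion, and the unitarity of $T_\ell^\pm$ at the group level follows by exponentiation of the skew-Hermitian Lie-algebra representation.
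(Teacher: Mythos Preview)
Your proposal is correct and takes essentially the same approach as the paper: the paper's proof is a one-line invocation of the ``infinitesimal method,'' stating that unitarity reduces to checking that the generators in Table~\ref{infinitesimalGeneratorInDiscreteFourierSpaceTable} are skew-symmetric with respect to $\braket{\cdot|\cdot}_\pm$, and leaving the verification to the reader. You carry out that verification explicitly, choosing the $\{J_0,J_\pm\}$ basis rather than $\{K,A_1,A_2\}$ so that the check collapses to the gamma recursion $w_+(n+1)/w_+(n)=(\ell+n+1)/(n-\ell)$; you also make explicit the well-definedness on the quotient and the boundary matching at $n=\pm(\ell+1)$, points the paper passes over in silence.
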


\begin{proof}
The ``infinitesimal method'' reduces the proof to a verification that the infinitesimal generators listed in Table \ref{infinitesimalGeneratorInDiscreteFourierSpaceTable}
are all skew--symmetric with respect to this inner product.
\end{proof}

\section{Growth and fluctuations of products of the type $k \,n_+$}
\label{KNPlusSection}
Let us now use these facts to calculate the generalised
Lyapunov exponent for some probability measures on the group. The measures we shall consider
in this section are supported on the set of matrices of the form
$$
g_j = k(t_j)\,n_+(\tau_j)
$$
where the $\tau_j$ are random variables drawn from an exponential distribution of mean $1/\rho$, and the $t_j$ are drawn independently from some other distribution.
As explained in \S \ref{disorderedSubsection}, the corresponding transfer operator
may then be worked out explicitly in terms of 
$$
{D} := {K}\;\;\text{and}\;\;{E} := {N}_+\,.
$$
We obtain
$$
{\mathscr T}_\ell^{-1} = \left ( 1 -\frac{1}{\rho} {N}_+ \right ) {\mathbb E} \left ( e^{t_j {K}} \right )^{-1}
$$
and so, by making use of the expressions for the operators $K$ and $N_+$ in Table
\ref{infinitesimalGeneratorInDiscreteFourierSpaceTable}, we find that ${\mathscr T}_\ell^{-1}$ is the 
difference operator
\begin{multline}
\left ( {\mathscr T}_\ell^{-1} v \right ) (n) = \frac{v(n)}{\chi(-n)} \\
-\frac{\i}{2 \rho} \left [ \frac{n+1+\ell}{\chi(-n-1)}
v(n+1) + 2 \frac{n}{\chi(-n)} v(n) + \frac{n-1-\ell}{\chi(-n+1)} v(n-1) \right ]
\label{inverseTransferOperatorForK}
\end{multline}
where
\begin{equation}
\chi ( \theta) := {\mathbb E} \left ( e^{\i \theta t_j} \right )
\label{characteristicFunction}
\end{equation}
is the characteristic function of the random variable $t_j$. By using Proposition
\ref{firstAdjointProposition}, we easily work out that the adjoint operator ${\mathscr A}_\ell$
is given by the formula
$$
{\mathscr A}_\ell = {\mathscr A}_0 + \overline{\ell} \,{\mathscr B}
$$
where
\begin{equation}
\left ( {\mathscr A}_0 f \right ) (n) = \frac{1}{\chi(n)} \left \{ f(n) + \frac{\i n}{2 \rho} \left [ f(n+1) + 2 f(n) + f(n-1) \right ] \right \}
\label{A0OperatorForK}
\end{equation}
and
\begin{equation}
\left ( {\mathscr B} f \right ) (n) =  \frac{-\i}{2 \rho} \frac{1}{\chi(n)} \left [ f(n+1) - f(n-1) \right ] \,.
\label{BOperatorForK}
\end{equation}

\subsection{The case $\ell \in {\mathbb N}$}
From the identity (\ref{firstNilpotentIdentity}),
we deduce
\begin{equation}
{N}_\pm \,{{\mathbf e}_{\ell,n}} = \frac{\i}{2} \left [ (\ell+n) \,{{\mathbf e}_{\ell,n-1}} - (\ell-n) \,{{\mathbf e}_{\ell,n+1}}
\pm 2 n \,{{\mathbf e}_{\ell,n}} \right ]\,.
\label{secondNilpotentIdentity}
\end{equation}
We then verify easily that the subspace
$$
V_\ell^0 = \underset{-\ell \le n \le \ell}{\text{span}} \{ {{\mathbf e}_{\ell,n}} \} 
$$
is invariant under ${\mathscr T}_\ell^{-1}$. The corresponding finite-dimensional
representation may be expressed as a tridiagonal matrix of order $2 \ell +1$ with entries
\begin{equation}
\braket{{\mathbf e}_{\ell^\ast, m} | \,{\mathscr T}_\ell^{-1} \,{\mathbf e}_{\ell, n}} 
= a_m  \,\delta_{m,n+1} + b_m \,\delta_{m,n} + c_m \,\delta_{m,n-1} \label{KNtridiagonalMatrix}
\end{equation}
where $\delta_{m,n}$ is the usual Kronecker delta and
\begin{equation}
a_m = \frac{\i}{2 \rho} \,\frac{\ell+1-m}{\chi(-m+1)}\,,\;\;
b_m = \frac{1}{\chi(-m)} \left (1 -\frac{\i \,m}{\rho} \right )\,,\;\;
c_m = \frac{-\i}{2 \rho} \,\frac{\ell+1+m}{\chi(-m-1)}\,.
\label{KNtridiagonalEntries}
\end{equation}

\subsection{Perturbative solution of the adjoint spectral problem}
\label{perturbationSubsectionForK}
We now discuss the practical calculation of the coefficients $\lambda_j$ and $f_j$ in the expansion
(\ref{perturbationExpansion}). If we assume that there is a unique probability
measure on the circle invariant under the action of the matrices 
$k(t_j) \,n_+(\tau_j)$ drawn at random, then the Dyson--Schmidt equation--- which corresponds to the case $j=0$ ---
has one and only one solution that is square-summable; it satisfies
$$
f_0(0)=1 \;\;\text{and}\;\;\lim_{|n| \rightarrow \infty} f_0(n) = 0\,.
$$
Using the fact that $\lambda_0=1$, it is then readily verified by induction on
$j$ that,
for every $j \in \{ 0,\,1,\,\ldots\}$, $\lambda_j$ is a real number and
$$
\forall\, n \in {\mathbb Z}\,,\;\;\overline{f_j(-n)} = f_j(n)\,.
$$
It follows immediately from this symmetry property that,
for a product of type $k \,n_+$ where the $n_+$ component has an exponential
distribution of mean $1/\rho$,
the coefficients in the expansion (\ref{perturbationExpansion}) are given by
\begin{equation}
\lambda_j = \frac{1}{\rho} \text{Im} \left [ f_{j-1}(1) \right ]\,,\;\; j \ge 1\,,
\label{simplifiedEigenvalueFormula}
\end{equation}
where, for $j\ge 1$, $f_j$ is the solution of
\begin{equation}
\left [ \left ( {\mathscr A}_0 - {I} \right ) f_j \right ] (n) = r_j(n) := \sum_{i=1}^j \lambda_i f_{j-i}(n) - \left ( {\mathscr B} f_{j-1} \right ) (n) \,,\;\;n \ge 1\,,
\label{simplifiedEigenvectorFormula}
\end{equation}
subject to
\begin{equation}
f_j(0) = 0 \;\;\text{and}\;\;
\lim_{n \rightarrow \infty} f_j(n) = 0\,.
\label{simplifiedNormalisationCondition}
\end{equation}

The solution of the inhomogeneous problem  (\ref{simplifiedEigenvectorFormula}) is given
by
\begin{equation}
f_j (n) = \sum_{m \ge 1} G(m,n) \,r_j (m)\,,\;\;n \ge 1\,,
\label{formulaForfj}
\end{equation}
where, for a fixed $m \ge 1$, $G(m,\cdot)$ is the solution of
$$
\left [ \left ( {\mathscr A}_0 - {I} \right ) f \right ] (n) = \delta_{m,n}
$$
subject to the conditions
$$
f(0) = 0 \;\;\text{and}\;\;\lim_{n \rightarrow \infty} f(n) = 0\,.
$$

In order to construct this ``Green function'', we require two solutions of the homogeneous
equation
$$
\left [ \left ( {\mathscr A}_0 - {I} \right ) f \right ] (n) = 0\,,\;\; n \ge 1\,.
$$
The particular solution $f_0$ takes care of the condition at infinity. By using ``reduction of order''
and the precise form of the operator ${\mathscr A}_0$ in Equation (\ref{A0OperatorForK}),
we readily find another solution, say $\zeta$, that satisfies the conditions $f(0)=0$
and $f (1) = 1$; it is
given by
\begin{equation}
\zeta (n) = f_0 (n) \sum_{j=1}^n \frac{1}{f_0(j-1) f_0(j)}\,.
\label{zeroSolution}
\end{equation}
The Wronskian of these two solutions is
$$
\begin{vmatrix}
f_0 (n-1) & \zeta(n-1) \\
f_0(n) & \zeta(n)
\end{vmatrix}
= 1
$$
and it follows easily that
\begin{equation}
G (m,n) = 2 \i \rho \,\frac{\chi (m)}{m} \begin{cases}
f_0 (m)\,\zeta (n) & \text{if $n \le m$} \\
f_0 (n) \,\zeta(m) & \text{if $n \ge m$}
\end{cases}\,.
\label{greenFunctionForK}
\end{equation}


In particular, using this in Equation (\ref{formulaForfj}) for $j=1$, we obtain the formula
\begin{multline}
\notag
f_1 (1) = 2 \i \rho \sum_{n=1}^\infty \frac{\chi(n)}{n} f_0 (n) \left \{ \lambda_1 f_0 (n) + \frac{\i}{2 \rho} \frac{1}{\chi(n)}\left [ f_0 (n+1)-f_0(n-1) \right ] \right \} \\
= 2 \i \rho\,\lambda_1 \,\sum_{n=1}^\infty \frac{\chi(n)}{n} f_0^2 (n) - \sum_{n=1}^\infty \frac{1}{n} 
\left [ f_0 (n)\,f_0 (n+1)-f_0(n-1)\,f_0(n) \right ]\,.
\end{multline}
If we use summation by parts for the last series, this becomes
\begin{equation}
f_1 (1) = f_0 (1) + 2 \i \rho\,\lambda_1 \,\sum_{n=1}^\infty \frac{\chi(n)}{n} f_0^2 (n)
- \sum_{n=1}^\infty \frac{1}{n(n+1)} f_0 (n)\,f_0(n+1)
\label{f1ForK}
\end{equation}
and we can, in principle, deduce an expression for $\lambda_2$.

\subsection{Nieuwenhuizen's example}
To give a concrete example, let the random variable $t_j$ have an exponential distribution of mean
density $p$, so that
\begin{equation}
\chi (n) = \frac{1}{1- \frac{\i n}{p}}\,.
\label{exponentialCharacteristicFunction}
\end{equation}
In this case, the Dyson--Schmidt equation (\ref{dysonSchmidtEquation}) becomes, after simplification,
$$
f_0(n+1) + 2 \left ( 1- \frac{\i \rho}{n+\i p} \right ) f_0(n) + f_0(n-1) = 0\,,
$$
subject to $f_0 (0) =1$.
We multiply through by $n+\i p$ and introduce the new unknown
$$
w (n) = \Gamma (n+\i p) f_0(n)\,.
$$
Then
\begin{equation}
\notag
w(n+1) + 2 \left [ n + \i (p-\rho) \right ] w(n) +  (n+\i p) (n +\i p -1) w(n-1) = 0\,.
\end{equation}
The general solution of this difference equation is given explicitly in Masson \cite{Ma} under
the ``Laguerre case'':
\begin{multline}
\notag
w (n) = c_1 \,(-1)^n  \,\Gamma (n+\i p) \,\Gamma (n+1+\i p)\,W_{-n -\i p, \frac{1}{2}} \left ( - 2 \i \rho \right  ) \\
+ c_2  \,(-1)^n\,\Gamma (n+1+\i p)\, M_{-n -\i p, \frac{1}{2}} \left ( - 2 \i \rho \right  )\,.
\end{multline}
The eigenvector $f_0$ is the solution of the Dyson--Schmidt equation that is recessive as $n \rightarrow \infty$; it is obtained by taking $c_2=0$. Hence
\begin{equation}
f_0(n) = 
(-1)^n  \frac{\Gamma (n+\i p+1) \,W_{-n -\i p, \frac{1}{2}} \left ( - 2 \i \rho \right  )}{\Gamma (\i p+1) \,W_{-\i p, \frac{1}{2}} \left ( - 2 \i \rho \right )}\,,\;\;n \ge 0\,.
\label{dysonSchmidtSolutionForK}
\end{equation}
Equation (\ref{simplifiedEigenvalueFormula}) then becomes
$$
\lambda_1 = \frac{1}{\rho} \,\text{Im} \left [ (-1) (\i p+1) \frac{W_{-\i p-1,\frac{1}{2}}(-2 \i \rho)}{W_{-\i p,\frac{1}{2}}(-2 \i \rho)} \right ]\,.
$$
Now, Formula 3 in \cite{GR}, \S 9.234, says that
$$
z \,W_{\kappa,\mu}'(z) = \left ( \kappa - \frac{z}{2} \right ) W_{\kappa,\mu}(z)
-\left [ \mu^2 - \left ( \kappa-\frac{1}{2} \right )^2 \right ]\,W_{\kappa-1,\mu} (z)\,.
$$
We may therefore express this result
in the alternative form
\begin{equation}
\lambda_1 = \frac{2}{p} \,\text{Im} \left [ \frac{W_{-\i p,\frac{1}{2}}'(-2 \i \rho)}{W_{-\i p,\frac{1}{2}}(-2 \i \rho)} \right ]
\label{firstNieuwenhuizenFormula}
\end{equation}
and we recover  a formula that was first obtained by Nieuwenhuizen in \cite{Ni}, \S 5.2,
by a different method. We shall return to this example in \S \ref{nieuwenhuizenRevisitedSubsection} and obtain an
alternative expression for $\lambda_2$ in terms of an integral.

\section{Realisation on the line}
\label{lineSection}
Another useful realisation of $T_\ell$ is that associated with the line of equation 
$$
x_2 =1\,.
$$
Every function $v \in V_\ell$ is completely determined
by the values  it takes on this line: Indeed, for $x_2 >0$,
$$
v(x_1,x_2) = x_2^{2 \ell} \,v \left ( \frac{x_1}{x_2}, 1 \right )
$$
whilst, for $x_2 < 0$, we have instead
$$
v(x_1,x_2) = v(-x_1,-x_2)= (-x_2)^{2 \ell} \,v \left ( \frac{x_1}{x_2}, 1 \right )\,.
$$
Hence, for every $x_2 \ne 0$,
$$
v(x_1,x_2) = |x_2|^{2 \ell} \,\widetilde{v} \left ( \frac{x_1}{x_2} \right )
$$
where
$$
\widetilde{v}(x) := v(x,1)\,.
$$
This defines $\widetilde{v}$ for every $x \in {\mathbb R}$. We denote by $Q$ the map that assigns $\widetilde{v}$ to $v$ and define the space $\widetilde{V}_\ell$ of complex-valued functions on ${\mathbb R}$ by Equation (\ref{realisationSpace}). As an example, we have
$$
\left [ Q \,1_{\ell} \right ] (x) = \widetilde{1}_\ell (x) := \left (1+x^2 \right)^\ell\,.
$$

\begin{lemma}
For $\widetilde{v} :\,{\mathbb R} \rightarrow {\mathbb C}$ to belong to $\widetilde{V}_\ell$, it is necessary and sufficient that
both $\widetilde{v}$ and the function
$$
x \mapsto |x|^{2\ell} \,\widetilde{v} (1/x)
$$
be smooth.
\label{representationSpaceLemma}
\end{lemma}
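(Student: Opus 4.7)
My plan is to establish both directions by passing between two natural affine charts on the punctured plane, namely the line $\{x_2=1\}$ used to define $\widetilde v=Qv$, and the line $\{x_1=1\}$. Every point of ${\mathbb R}_\ast^2$ lies in at least one of the open cones $U_1:=\{x_2\ne 0\}$ or $U_2:=\{x_1\ne 0\}$, and on each cone a smooth even homogeneous function of degree $2\ell$ is determined by its restriction to the corresponding line. Concretely, for $v\in V_\ell$, homogeneity and evenness yield
\begin{equation*}
v(x_1,x_2)=|x_2|^{2\ell}\widetilde v(x_1/x_2) \quad (x_2\ne 0),
\qquad v(x_1,x_2)=|x_1|^{2\ell}\widehat v(x_2/x_1) \quad (x_1\ne 0),
\end{equation*}
where $\widehat v(t):=v(1,t)$. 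Comparing these on the overlap $U_1\cap U_2$ (and using that $x_1\mapsto v(1,x_1)$ has to agree with the first formula at $x_2=1$) produces the key identity $\widehat v(t)=|t|^{2\ell}\widetilde v(1/t)$ for $t\ne 0$.

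For necessity, I would simply note that if $v\in V_\ell$ then both $\widetilde v$ and $\widehat v$ are restrictions of the smooth function $v$ to smooth curves contained in ${\mathbb R}_\ast^2$, hence are smooth on all of ${\mathbb R}$. The identity above then says that the function $t\mapsto|t|^{2\ell}\widetilde v(1/t)$ coincides on ${\mathbb R}\setminus\{0\}$ with the smooth function $\widehat v$, so it admits a smooth extension to ${\mathbb R}$, which is what the lemma requires.

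For sufficiency, assume $\widetilde v$ is smooth and that $|t|^{2\ell}\widetilde v(1/t)$ extends smoothly to a function $\widehat v$ on ${\mathbb R}$. I would then \emph{define} $v$ piecewise on ${\mathbb R}_\ast^2=U_1\cup U_2$ by the two displayed formulae above. Three things need to be checked, each of which is routine: (i) the two formulae agree on $U_1\cap U_2$, which follows from the identity $\widehat v(t)=|t|^{2\ell}\widetilde v(1/t)$; (ii) $v$ is smooth on each of $U_1$ and $U_2$, because on $U_1$ the map $(x_1,x_2)\mapsto x_1/x_2$ and the function $x_2\mapsto|x_2|^{2\ell}$ are both smooth (the two connected components $\{x_2>0\}$ and $\{x_2<0\}$ being treated separately), and similarly on $U_2$; (iii) evenness and homogeneity of degree $2\ell$ follow directly from the formulae. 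By construction $Qv=\widetilde v$.

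The single substantive issue—and the reason the second smoothness hypothesis is not redundant—is precisely step (ii) at points of the $x_1$-axis. The formula $v(x_1,x_2)=|x_2|^{2\ell}\widetilde v(x_1/x_2)$ cannot be extended smoothly across $x_2=0$ by itself, because $|x_2|^{2\ell}$ is generally not smooth there; the only way to obtain smoothness at such points is through the second chart, where $|x_1|^{2\ell}\widehat v(x_2/x_1)$ is smooth in a neighbourhood of the $x_1$-axis thanks to the hypothesis that $\widehat v$ extends smoothly through $t=0$. Thus the two smoothness conditions in the statement are exactly what is needed to glue the two chart descriptions into a single smooth function on the whole punctured plane.
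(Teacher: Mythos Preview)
The paper states this lemma without proof, treating it as a routine characterisation of the realisation space on the line. Your argument is correct and is the standard one: you use the two affine charts $\{x_2\ne0\}$ and $\{x_1\ne0\}$ covering ${\mathbb R}_\ast^2$, observe that an even homogeneous function of degree $2\ell$ is determined on each chart by its restriction to the corresponding line, and that the change-of-chart map is precisely $\widehat v(t)=|t|^{2\ell}\widetilde v(1/t)$. Necessity follows because both $\widetilde v$ and $\widehat v$ are restrictions of a smooth function to lines in ${\mathbb R}_\ast^2$; sufficiency follows by gluing, with the second smoothness hypothesis being exactly what is needed to cover a neighbourhood of the $x_1$-axis. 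Your identification of this last point as the only non-tautological step is accurate.
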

One consequence of this characterisation is that, for a function $\widetilde{v}$ to belong to $\widetilde{V}_\ell$, it is necessary that
the limits
\begin{equation}
\lim_{x \rightarrow \infty} |x|^{-2 \ell} \,\widetilde{v}(x) \;\;\text{and}\;\;\lim_{x \rightarrow -\infty} |x|^{-2 \ell} \,\widetilde{v}(x)
\label{characterisationAtInfinity}
\end{equation}
exist and be equal.

The map $Q :\, V_{\ell} \rightarrow \widetilde{V}_{\ell}$ is bijective and the
representation $T_{\ell}$ is realised on the space $\widetilde{V}_{\ell}$ via Equation (\ref{intertwinning}) or, more explicitly,
\begin{equation}
\left [ \widetilde{T}_\ell(g) \widetilde{v} \right ] (x) := \left [ T_\ell(g) v \right ] \left ( x, 1\right ) 
= v \left ( a x + c, \,b x + d  \right )
= \left | b x + d\right |^{2 \ell}\,
 \widetilde{v}(x \cdot g)
\label{lineRepresentation}
\end{equation}
where
\begin{equation}
x \cdot g := \frac{a x + c}{b x+d}
\label{actionOnTheLine}
\end{equation}
describes the action of the group on the line. We shall henceforth drop the tilde. The infinitesimal generators associated with this realisation are
listed in Table \ref{infinitesimalGeneratorTable}. In particular, every eigenvector of $J_0 = \i \,K$
corresponding to the eigenvalue $n \in {\mathbb Z}$ is proportional to
\begin{equation}
{{\mathbf e}_{\ell,n}}(x)  = \frac{1}{\sqrt{\pi}} \left ( \frac{x+\i}{x-\i} \right )^n 1_\ell (x) = \frac{1}{\sqrt{\pi}} (x+\i)^{\ell+n} ( x-\i)^{\ell-n}
\label{basisFunctionOnTheLine}
\end{equation}
and the ${{\mathbf e}_{\ell,n}}$ form a basis for $V_\ell$.

\subsection{Multiplier, dual space and adjoint}
The Lebesgue measure $dx$ is invariant
under the action of the subgroup $N_-$. The
corresponding multiplier is
\begin{equation}
\sigma_{N_-} (x,g) := \frac{d}{d x} \left (  x \cdot g \right ) = \frac{1}{\left ( b x +d \right )^2}
\label{LineMultiplier}
\end{equation}
and so the representation may also be expressed as
\begin{equation}
\left [ T_\ell(g) v \right ] (x) = \sigma_{N_-} (x,g)^{-\ell} \,v ( x \cdot g )\,.
\label{LineRealisationWithMultiplier}
\end{equation}

We now turn to the determination of the space dual to $V_\ell$. Since
$$
-2 \ell - 2 \overline{\ell^\ast} = 2\,,
$$
we deduce from Equation (\ref{characterisationAtInfinity}) that, for every $v \in V_{\ell}$ and every $f \in V_{\ell^\ast}$, the limits
$$
\lim_{x \rightarrow \infty} |x|^{2} \,\overline{f(x)}\,v(x) 
\;\;\text{and}\;\;\lim_{x \rightarrow -\infty} |x|^{2} \,\overline{f(x)}\,v(x) 
$$
exist and are equal. It follows that the 
function $\overline{f}\,v$ is absolutely integrable, and so $f$ may be identified with the 
continuous linear functional defined on $V_\ell$ by
\begin{equation}
v \mapsto  \int_{-\infty}^\infty \overline{f(x)}\,v(x)\,dx =: \braket{f|v}\,.
\label{lineInnerProduct}
\end{equation}
Hence
$$
V_{\ell^\ast} \subset V_\ell^\ast\,.
$$
By adapting for this new definition of $\braket{\cdot | \cdot}$
the calculation we performed earlier for the realisation on the circle, we readily
deduce the analogue of Equation (\ref{adjointOnTheCircle}) for this realisation:
\begin{equation}
\left [ T_\ell^\ast (g) f \right ] (x) = 
\sigma_{N_-} (x,g^{-1})^{-\ell^\ast} \,f (x \cdot g^{-1}) = \left ( a - bx  \right )^{2 \ell^\ast} \,f \left ( \frac{d x-c}{a-bx} \right )\,.
\label{adjointOnTheLine}
\end{equation}
.

The foregoing suggests that the ``natural'' Hilbert space for the realisation on the line is
$$
H := \left \{ v :\,{\mathbb R} \rightarrow {\mathbb C} :\, \int_{-\infty}^\infty |v(x)|^2\, dx < \infty \right \}
$$
equipped with the inner product (\ref{lineInnerProduct}).
We remark, however, that, in this realisation, $V_\ell \subset H$ only if $\text{Re} \,\ell < -1/2$. Nevertheless,
the sequences
$$
\{ {\mathbf e}_{\ell^\ast,m} \}_{m \in {\mathbb Z}} \subset V_{\ell^\ast}\;\;\text{and} \;\;\{ {{\mathbf e}_{\ell,n}} \}_{n \in {\mathbb Z}}
\subset V_{\ell}
$$
constitute a bi-orthogonal system for this inner product since, in this realisation,
\begin{multline}
\label{biorthogonality}
\braket{{\mathbf e}_{\ell^\ast,m} | \,{\mathbf e}_{\ell, n}} := \frac{1}{\pi} \,\int_{-\infty}^\infty dx\,\overline{\left ( \frac{x+\i}{x-\i} \right )^m1_{\ell^\ast}(x)}
\,\left ( \frac{x+\i}{x-\i} \right )^n1_{\ell}(x) \\
=  \frac{1}{\pi}\,\int_{-\infty}^\infty \frac{dx}{1+x^2} \overline{\left ( \frac{x+\i}{x-\i} \right )^m}\, \left ( \frac{x+\i}{x-\i} \right )^n \overset{\underset{\downarrow}{x = \cot \frac{\theta}{2}}}{=} 
\frac{1}{2 \pi} \int_0^{2 \pi} e^{\i (n-m) \theta} \,d \theta = \delta_{m,n}\,.
\end{multline}

\subsection{Realisation in continuous Fourier space}
\label{continuousFourierSubsection}
The realisation on the line is convenient if we wish to diagonalise the representation with respect to the 
subgroup $N_-$. Indeed, for
$$
g = n_-(t) := \begin{pmatrix}
1 & 0 \\
t & 1
\end{pmatrix}
$$
we have
$$
\left [ T_{\ell} \left ( g\right ) v \right ] (x) = v(x+t)\,.
$$
The restriction of $T_\ell$ to $N_-$ acts on $V_{\ell}$ by translation, and this suggests that we go over to the 
Fourier transform:
\begin{equation}
\widehat{v}(s) := \int_{-\infty}^\infty d x \,e^{\i s x} v(x)\,,\;\;s \in {\mathbb R}\,.
\label{fourierTransform}
\end{equation}
Proceeding formally, if we denote by $Q$ the operator that assigns to $v$ its Fourier transform 
$\widehat{v}$ and put
\begin{equation}
\widehat{V}_{\ell} := \left \{ \widehat{v} :\, v \in V_\ell \right \}
\label{transformSpace}
\end{equation}
we obtain a
new realisation $\widehat{T}_\ell$ of the representation, defined by
$$
\widehat{T}_\ell\, Q = Q\, T_\ell\,.
$$ 
In particular, for $g=n_-(t)$,
$$
\left [ \widehat{T}_\ell \left ( g \right ) \widehat{v} \right ] (s) = \left [ Q \,T_\ell \left ( g\right ) v \right ] (s) = \int_{-\infty}^\infty dx \,e^{\i s x} v(x+t) 
\overset{\underset{\downarrow}{x' = x+t}}{=}
e^{-\i s t} \,\widehat{v} (s) 
$$
and so $\widehat{T}_\ell(g)$ is indeed a multiplication operator.

Now, the fact that the limits (\ref{characterisationAtInfinity}) exist and are equal imply in particular that
$\widehat{v}$ {\em is typically not smooth at the origin}. 
For $\text{Re} \,\ell \ge -1/2$,  
$v$ does not even decay sufficiently
quickly at infinity to have a Fourier transform in the classical sense.
Nevertheless, we can give a precise meaning to Formula (\ref{fourierTransform})
by viewing the elements of $V_\ell$ as ``generalised functions'' in the sense of Gel'fand \& Shilov \cite{GS}.
Put briefly, a generalised function, say $v$, is a continuous linear functional
on the space of smooth functions $\varphi$ with compact support, and one expresses the value
that the functional takes at $\varphi$ in the form
$$
\int_{-\infty}^\infty d x \,\overline{v(x)}\,\varphi(x) \,.
$$
The derivative $v'$ of the generalised function $v$ is then defined to be the generalised function 
$$
\varphi \mapsto -\int_{-\infty}^\infty d x\,\overline{v(x)}\, \varphi' (x)\,.
$$

We remark that the Fourier transform $\widehat{\varphi}$ of a test function $\varphi$ exists in the classical sense.
The Fourier transform of the generalised function $v$ is then, by definition, a continuous linear
functional on the space of transformed test functions, given implicitly 
by the Parseval formula
\begin{equation}
\int_{-\infty}^\infty d s \, \overline{\widehat{v}(s)} \,\widehat{\varphi}(s)  = 2 \pi \int_{-\infty}^\infty d x \, \overline{v(x)}\,\varphi(x) \,
\label{parsevalFormula}
\end{equation}
for all transformed test functions $\widehat{\varphi}$. For ordinary, square-integrable functions, this definition coincides with the classical
definition. Furthermore, the Fourier transform of the generalised function $v'$ is $-\i s \widehat{v}$, and the Fourier transform of the generalised function $x v$ is $-\i \widehat{v}'$,
where $\widehat{v}$ is the Fourier transform of the generalised function $v$.

With these clarifications, the Fourier transform 
$\widehat{\mathbf e}_{\ell,n}$ of the basis function ${\mathbf e}_{\ell, n}$
can be worked out as follows: from the formula (12) in \cite{Er}, \S 3.2, we obtain
\begin{equation}
\widehat{\mathbf e}_{\ell,n} (s) = \sqrt{\pi} (-1)^n \left | s/2 \right |^{-\ell-1}
\begin{cases}
\frac{1}{\Gamma(n-\ell)}\,W_{n,\ell+\frac{1}{2}} \left ( 2 s \right ) & \text{if $s > 0$} \\
\frac{1}{\Gamma(-n-\ell)}\,W_{-n,\ell+\frac{1}{2}} \left ( -2 s \right ) & \text{if $s<0$}
\end{cases}\,.
\label{fourierTransformBasis}
\end{equation}
This formula is valid in the classical sense if $\text{Re} \,\ell < -1/2$ and $\ell$ is not an integer.
If $\ell$ is an integer, the formula remains valid if we interpret $\Gamma(m)$ as infinity for 
$m=-1,\,-2\,,\ldots$

As a concrete illustration, when $\ell$ is a negative integer, the Fourier transform exists in the classical sense. Using the formula
$$
W_{\frac{\alpha}{2}+\frac{1}{2}+m,-\frac{\alpha}{2}} (z) = (-1)^m m! \,e^{-z/2} z^{\frac{\alpha+1}{2}}
L_m^{\alpha} (z)\,,\;\;m=0,\,1,\,2,\,\ldots
$$
where
$$
L_m^{\alpha} (z) := \sum_{j=0}^m (-1)^j \binom{j+\alpha}{m-j} \frac{z^j}{j!}\,,
$$
we can express some of the Fourier coefficients in terms of Laguerre polynomials:
\begin{equation}
\widehat{\mathbf e}_{\ell,n}(s) = 2 \sqrt{\pi}\,(-1)^\ell \frac{\Gamma(n+\ell+1)}{\Gamma(n-\ell)}\,s^{-2\ell-1} \,e^{-s}\,L_{n+\ell}^{-2 \ell-1} ( 2 s)\,\theta(s)\,,\;\; n \ge -\ell, 
\label{firstFourierTransform}
\end{equation}
where
$$
\theta(s) := \begin{cases}
1 & \text{if $s >0$} \\
0 & \text{otherwise}
\end{cases}
$$
is the Heaviside function.
For $\ell < n <-\ell$, the formula (\ref{fourierTransformBasis}) may be used verbatim, whilst
$$
\widehat{\mathbf e}_{\ell,n}(s) = \sqrt{\pi} \frac{(-1)^n}{\Gamma (-n-\ell)} \left | \frac{s}{2} \right |^{-\ell-1}  \,W_{-n,\ell+\frac{1}{2}} (-2 s) \, \theta(-s) \;\;\text{for $n \le \ell$}\,.
$$

When $\ell=0$, the Fourier transform must be interpreted in the sense of generalised functions.
To compute it, we can use the fact that
$$
{\mathbf e}_{0,n}(x) = (1+x^2) \,{\mathbf e}_{-1,n}(x) \implies \widehat{\mathbf e}_{0,n} (s)= \left ( 1 - \frac{d^2}{d s^2} \right ) \widehat{\mathbf e}_{-1,n}(s)\,.
$$
In particular, this yields
$$
\widehat{\mathbf e}_{0,0}(s) = \frac{1}{\sqrt{\pi}}\,\delta (s)
$$
where $\delta$ is the familiar Dirac delta.

In summary, by going over to the Fourier transform, we obtain from the realisation on the line a new realisation
of the representation $T_\ell$ which has the property of being diagonal with respect to the subgroup
$N_-$. We call it the {\em realisation in continuous Fourier space}, and we shall henceforth drop the hats.
In this realisation, the sequences
$$
\left \{ {\mathbf e}_{\ell^\ast,m} \right \}_{m \in {\mathbb Z}}\;\;\text{and}\;\;\left \{ {{{\mathbf e}_{\ell,n}}}\right \}_{n \in {\mathbb Z}}
$$
defined by Formula (\ref{fourierTransformBasis}) constitute, by virtue of Parseval's identity, a bi-orthogonal system for the inner product
$$
\braket{{f} | {v}} := \frac{1}{2 \pi}\,\int_{-\infty}^{\infty} d s \,\overline{f(s)}\,{v}(s)\,.
$$
The infinitesimal generators associated 
with this realisation are listed in Table \ref{infinitesimalGeneratorInContinuousFourierSpaceTable}.

\begin{table}
\begin{tabular}{c}
 \hline
\\ 
${K} = \i  \left [ \frac{s}{2} \frac{d^2}{d s^2} + (\ell+1) \frac{d}{d s} - \frac{s}{2} \right ]$ \\
 \\
${A}_1= - s \frac{d}{ds} - (\ell+1)$ \\
\\
${A}_2  = -\i \left [ \frac{s}{2} \frac{d^2}{d s^2} + (\ell+1) \frac{d}{ds} + \frac{s}{2} \right ]$ \\
\\
 ${N}_+ = -\i \left [ s \frac{d^2}{ds^2} + 2 (\ell+1) \frac{d}{ds}  \right ]$ \\
  \\
 ${N}_- = -\i s$ \\
 $ $ \\
 \hline
\\[0.125cm]
\end{tabular}
\caption{Infinitesimal generators associated with the realisation of $T_\ell$ in continuous Fourier space.}
\label{infinitesimalGeneratorInContinuousFourierSpaceTable}
\end{table}

\subsection{Realisation of one of the holomorphic representations in continuous Fourier space}
Let now
$$
\ell \in \left \{ -1,\,-2,\,\ldots \right \}\,.
$$
In this case, the invariant subspace
$$
V_\ell^- := \underset{n \ge -\ell}{\text{span}} \{ {{\mathbf e}_{\ell,n}} \} 
$$
consists of the Fourier transform of those functions on the real line that may be continued to the lower half of the complex plane; it
yields one of the holomorphic subrepresentations, denoted $T_\ell^-$.
The basis functions were worked out in \S \ref{continuousFourierSubsection}; we found
that they are supported on ${\mathbb R}_+$ and given by Formula (\ref{firstFourierTransform}).
By using the fact that the Laguerre polynomials $L_m^\alpha$ are orthogonal with respect to the inner
product
$$
\left ( \varphi, \psi \right ) := \int_0^\infty \varphi(x) \,\psi(x)\,x^{\alpha} e^{-x} dx
$$
we deduce that the
basis $\{ {{\mathbf e}_{\ell,n}} \}_{n \ge -\ell}$ is orthogonal with respect to the inner
product
\begin{equation}
\braket{f | v}_\ell := \int_0^\infty \overline{f(s)} \,v(s)\, s^{2 \ell +1} ds
\label{continuousFourierSpaceInnerProduct}
\end{equation}
Now, the infinitesimal generators are skew-symmetric with respect to this inner product; it follows that the realisation of $T_\ell^-$ in continuous Fourier space is unitary.
The connection between the holomorphic representations of $\text{SL}(2,{\mathbb R})$ and the Laguerre polynomials
has been discussed by Davidson {\em et al.} \cite{DOZ}.

\section{Growth and fluctuations of products involving $N_-$}
\label{NminusSection}

In this section, we study products consisting of matrices of the form
$$
g_j = n_- (t_j)\,e(\tau_j)
$$
where $e(\tau)$ is a fixed one-parameter subgroup, the $\tau_j$ are random variables drawn from an exponential distribution of mean $1/\rho$, and the $t_j$ are drawn independently from some other distribution with characteristic function 
$\chi$. We shall consider in turn the cases
$e(\tau_j)=k(\tau_j)$, $n_+(\tau_j)$ and $a_1(\tau_j)$.

\subsection{Products of the type $n_- \,k$}
In this case, the transfer operator is given by
$$
{\mathscr T}_\ell = {\mathbb E} \left ( e^{t_j {N_-}} \right ) \left ( 1 -\frac{1}{\rho} {K} \right )^{-1}\,. 
$$

Let us first discuss the case $\ell \in {\mathbb N}$.
The restriction of ${\mathscr T_\ell}$ to the invariant subspace $V_\ell^0$ may then be expressed as a product of two matrices of order $2 \ell +1$. To the $k$ component, there corresponds the diagonal matrix
$$
\text{diag} \left ( \frac{1}{1+\i \ell/\rho},\, \cdots,\, \frac{1}{1-\i \ell/\rho}\right )\,.
$$
To the $n_-$ component, there corresponds the matrix with entries
\begin{equation}
\braket{\ell^\ast, m | \,{\mathbb E} \left ( e^{t_j N_-} \right ) | \ell, n} 
= {\mathbb E} \left [ \braket{\ell^\ast, m | e^{t_j N_-} | \ell, n} \right ]
= \sum_{i=0}^{2 \ell} \frac{{\mathbb E} (t^i)}{i!} N_-^i
\notag
\end{equation}
where we have used the fact that
\begin{equation}
N_- := \frac{\i}{2} \begin{pmatrix} 2 \ell & 1 & 0 & & & \\
-2 \ell & 2 (\ell-1) & 2 & & & \\
 & \ddots & \ddots & \ddots & & \\
& & & -2 & -2(\ell-1) & 2 \ell \\
& & & 0 & -1 & -2 \ell
\end{pmatrix}
\label{NtridiagonalEntries}
\end{equation}
is a nilpotent matrix such that $N_-^i = 0$ for $i > 2 \ell$.
It follows in particular that, for $\ell \in {\mathbb N}$, the generalised Lyapunov exponent
depends only on the first $2 \ell$ moments of the random variable $t_j$.

Turning next to the perturbative solution of the adjoint spectral problem, we see,
with the help of Table \ref{infinitesimalGeneratorInContinuousFourierSpaceTable}, that the adjoint ${\mathscr A}_\ell$ of ${\mathscr T}_\ell^{-1}$
is given by the formula
$$
{\mathscr A}_\ell = {\mathscr A}_0 + \overline{\ell} \,{\mathscr B}
$$
where
\begin{equation}
{\mathscr A}_0  = \frac{1}{\chi(s)} \left [ 1 + \frac{\i s}{2 \rho} \left ( \frac{d^2}{d s^2} - 1 \right )\right ]\,,\;\; {\mathscr B} = -\frac{1}{\chi(s)} \frac{\i}{\rho} \frac{d}{d s}
\label{operatorsForNminusK}
\end{equation}
and
$$
\chi ( \theta) := {\mathbb E} \left ( e^{\i \theta t_j} \right )
$$
is the characteristic function of the random variable $t_j$. 

We can look for a perturbative solution of the spectral problem of the form
(\ref{perturbationExpansion}) 
and, as before, this leads to the recurrence relation
(\ref{recurrenceRelation}) for the unknown coefficients $\lambda_j$ and $f_j$. For $j=0$, in view of the normalisation (\ref{perturbativeNormalisationCondition}),
we then have that
$$
\lambda_0 = 1\,,\;\;v_0 = \delta\,,
$$
whilst $f_0$ solves the Dyson--Schmidt equation (\ref{dysonSchmidtEquation})
subject to the conditions $f_0(0) =1$ and $f_0(s) \rightarrow 0$ as $|s| \rightarrow \infty$.

For $j=1$, however, a new situation arises, namely
that {\em the formula (\ref{eigenvalueFormula}) does not make sense in this realisation} since, as the expressions for the basis functions make clear,
$f_{0}$ has a jump in its derivative, so that the Dirac delta functional cannot be applied to it. This difficulty is inherent in the use of the projective variable $x$ associated with the realisation on the line; it is linked to the fact that, in contrast with the realisation 
on the circle, the representation space now depends explicitly on $\ell$, so that it is not immediately clear in what space one should seek the successive terms $f_j$ of the perturbation expansion. We proceed to describe the adjustments that are needed in order to compute them. 

The problem of assigning a precise meaning to the formula (\ref{eigenvalueFormula}) has already been addressed by 
Halperin \cite{Ha} in connection with the Frisch--Lloyd model. Indeed, for the realisation on the line, the integral 
$$
\int_{-\infty}^\infty x f(x)\, dx
$$
does not always exist in the strict sense for $f \in V_{-1}$, but the fact that the limits
$$
\lim_{x \rightarrow -\infty} x^2 f(x) \;\;\text{and}\;\;\lim_{x \rightarrow \infty} x^2 f(x)
$$
exist and are equal implies that it always makes sense as
a Cauchy principal value integral. For the realisation in continuous Fourier space, the corresponding interpretation is to view $\braket{{\mathscr B} f_{0} | v_0}$ as
$$
\overline{\frac{1}{2} \left [ {\mathscr B} f_{0} (0+)
+ {\mathscr B} f_{0} (0-) \right ]} = \frac{\i}{2 \rho} \overline{\left [f_{j-1}'(0+) + f_{j-1}'(0-) \right ]}\,.
$$
Now, it is easy to show that the solution of the Dyson--Schmidt equation has the property
$$
\forall\, s \in {\mathbb R}\,,\;\;\overline{f_0(-s)} = f_0 (s)\,.
$$
For $j=1$, we may therefore use the equivalent interpretation
\begin{equation}
\braket{{\mathscr B} f_{0} | v_0} := \frac{1}{\rho} \,\text{Im} \left [ f_{0}'(0+) \right ]\,.
\label{regularisationFormula}
\end{equation}
Equation (\ref{eigenvalueFormula}) then says
\begin{equation}
\lambda_1 = \frac{1}{\rho} \,\text{Im} \left [ f_{0}'(0+) \right ]\,.
\label{simplifiedEigenvalueFormulaForNminus}
\end{equation}

Proceeding to the next stage, it follows from the fact that $\lambda_1$ is real that
\begin{equation}
r_1 := \lambda_1 f_0 - {\mathscr B} f_0 = \lambda_1 f_0 + \frac{1}{\chi(s)} \frac{\i}{\rho} f_0' 
\label{formulaForr1}
\end{equation}
has the property
$$
\forall\, s \in {\mathbb R}\,,\;\;\overline{r_1(-s)} = r_1(s)\,.
$$
The solution $f_1$ of Equation (\ref{recurrenceRelation}) inherits this property, and the equation may therefore be replaced by
\begin{equation}
\left ( {\mathscr A}_0 - {I} \right ) f_1  = r_1\,,\;\;s > 0\,,
\label{simplifiedEigenvectorFormulaForNminus}
\end{equation}
subject to $f_1(0)=0$ and $f_1(s) \rightarrow 0$ as $s \rightarrow \infty$.
The solution of this inhomogeneous problem  is given by
\begin{equation}
f_1 (s) = \int_0^\infty dt \,G(s,t) \,r_1(t)\,,\;\;s > 0\,,
\label{formulaForf1}
\end{equation}
where, for a fixed $t > 0$, $G(\cdot,t)$ is the solution of
$$
\left [ \left ( {\mathscr A}_0 - {I} \right ) f \right ] (s) = \delta (s-t)
$$
subject to the conditions
$$
f(0) = 0 \;\;\text{and}\;\;\lim_{s \rightarrow \infty} f(s) = 0\,.
$$

The construction of this Green function follows essentially the same principles
as in \S \ref{perturbationSubsectionForK}: From the solution $f_0$ of the Dyson--Schmidt equation, we 
obtain a second solution
\begin{equation}
\zeta (s) = f_0 (s) \int_{0}^s \frac{d \tau}{f_0^2( \tau)}\,.
\label{zeroSolutionForNminus}
\end{equation}
of the homogeneous problem such that $\zeta(0)=0$ and $\zeta'(0+) =1$.
The Wronskian of these two solutions is
$$
\begin{vmatrix}
f_0 (s) & \zeta(s) \\
f_0'(s) & \zeta'(s)
\end{vmatrix}
= 1
$$
and it follows easily that
\begin{equation}
G (s,t) = 2 \i \rho \,\frac{\chi (t)}{t} \begin{cases}
f_0 (t)\,\zeta (s) & \text{if $s \le t$} \\
f_0 (s) \,\zeta(t) & \text{if $s \ge t$}
\end{cases}\,.
\label{greenFunctionForNminus}
\end{equation}

Equation (\ref{formulaForf1}) then yields, after differentiation with respect to $s$, 
\begin{equation}
\notag
f_1' (s) = 2 \i \rho \left \{ f_0'(s) \int_0^s \frac{dt}{t} \,\chi(t) \,\zeta(t) \,r_1 (t) + \zeta'(s) 
\int_s^\infty \frac{dt}{t} \,\chi(t) \,f_0 (t) \,r_1 (t) \right \}\,,\;\;s > 0\,.
\end{equation}
The right-hand side is a complex-valued fonction of $s$; whether it has a limit as
$s \rightarrow 0+$ depends on the behaviour of $r_1$ near $0$. Now, Formula (\ref{simplifiedEigenvalueFormulaForNminus})
implies that the {\em real part} of $r_{1}$ vanishes at $0$.
We deduce that the {\em imaginary part} of the expression for $f_1'$
does have a limit, and we may write
$$
\lim_{s \rightarrow 0+}  \text{Im} \left [ f_1'(s) \right ] =  2 \rho \int_0^\infty \frac{d s}{s}\,\text{Re} \left [ \chi(s)\, f_0 (s)\,r_1(s) \right ]\,.
$$
Upon using the regularisation (\ref{regularisationFormula}) again, we find
\begin{equation}
\lambda_2  = \frac{1}{\rho} \,\lim_{s \rightarrow 0+}  \text{Im} \left [ f_{1}'(s) \right ] 
= 2 \int_0^\infty \frac{ds}{s} \,\text{Re} \left [ \lambda_1 \chi(s) f_0^2(s) + \frac{\i}{\rho} f_0(s) f_0'(s) \right ]\,.
\label{lambda2ForNminus}
\end{equation}
This procedure may be continued to compute successively $f_2$, $\lambda_3$ and so on.

\subsection{Nieuwenhuizen's example revisited}
\label{nieuwenhuizenRevisitedSubsection}
We remark that
$$
g_j = n_- (t_j)\,k(\tau_j) \implies g_j^t = k(-\tau_j) \,n_+(t_j)
$$
and, since taking the transpose does not change the generalised Lyapunov
exponent, we can use Nieuwenhuizen's example to check that the realisations
in discrete and in continuous Fourier spaces yield consistent results.

Let the random variable $t_j$ have an exponential distribution of mean
density $p$, so that
$$
\chi (s) = \frac{1}{1- \frac{\i s}{p}}\,.
$$
The Dyson--Schmidt equation (\ref{dysonSchmidtEquation}) then takes the form
$$
f_0'' - \left ( \frac{2 \rho}{p-\i s} + 1 \right ) f_0 = 0\,,
$$
subject to $f_0 (0) =1$ and $f_0(s) \rightarrow 0$ as $s \rightarrow 0$. This equation may be reduced to Whittaker's equation, and we easily deduce
\begin{equation}
f_0(s) = \frac{W_{-\i \rho, \frac{1}{2}} \left ( 2 \i p + 2 s \right  )}{W_{-\i \rho, \frac{1}{2}} \left (2 \i p \right )}\,,\;\;s > 0\,.
\label{dysonSchmidtSolutionForNminus}
\end{equation}
Equation (\ref{simplifiedEigenvalueFormulaForNminus}) then yields
\begin{equation}
\lambda_1 = \frac{2}{\rho} \,\text{Im} \left [ \frac{W_{-\i \rho, \frac{1}{2}}' \left ( 2 \i p \right  )}{W_{-\i \rho, \frac{1}{2}} \left (2 \i p \right )}\right ]
\label{secondNieuwenhuizenFormula}
\end{equation}
and this is indeed consistent with the formula (\ref{firstNieuwenhuizenFormula}) we found earlier. 

Plots of $\gamma$ and of the variance $\sigma^2 = \lambda_1^2/4- \lambda_2/2$ against $1/\rho$, for a fixed $p$, are
shown in Figure \ref{frischLloydFigure}.

\begin{figure}
\includegraphics[width=8cm,height=6cm]{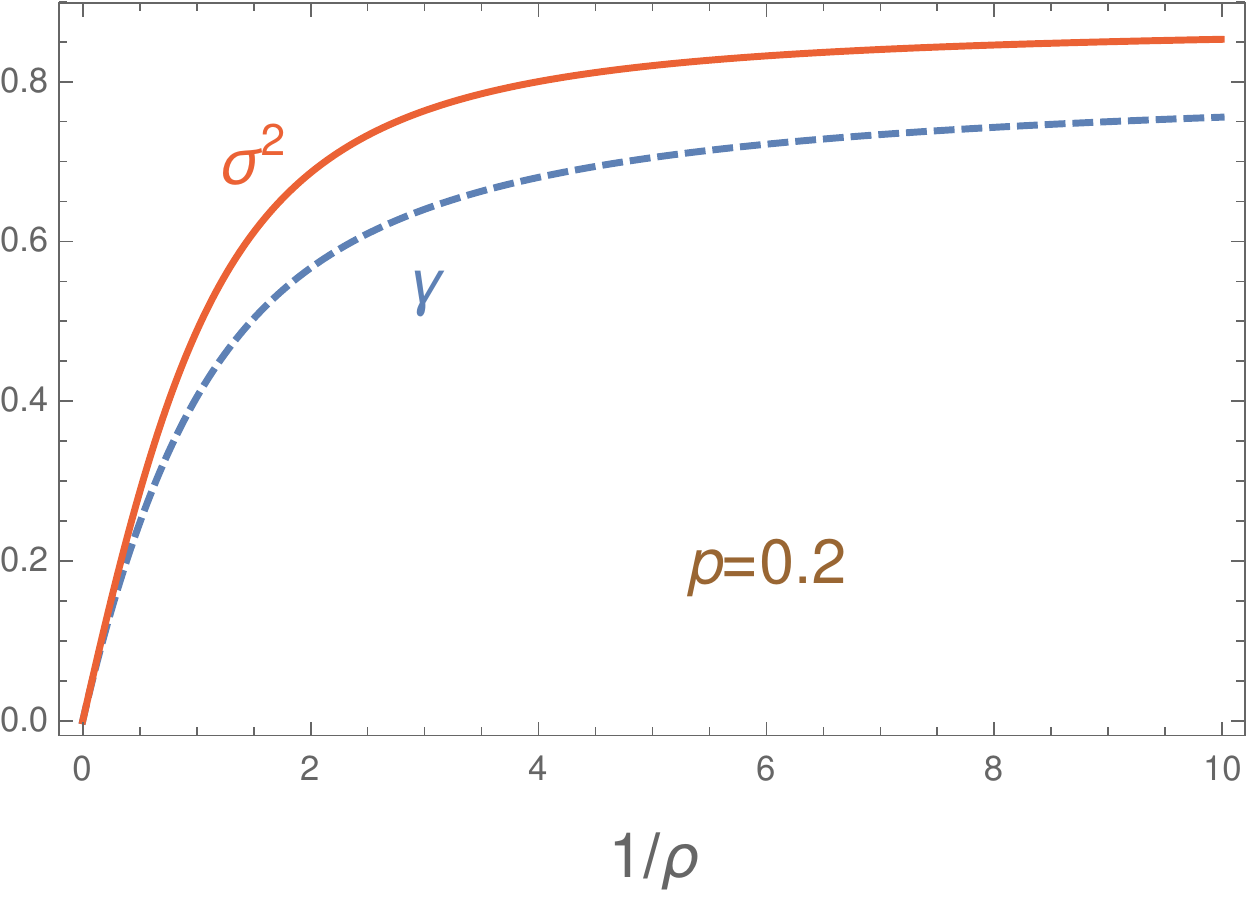}%
%
\caption{The growth rate $\gamma$ and the variance $\sigma^2$ for the product of type
$n_-(t_j) \,k (\tau_j)$ discussed in \S \ref{nieuwenhuizenRevisitedSubsection}, plotted against the mean value $1/\rho$ of $\tau_j$, for a fixed unit mean $1/p=5$ of $t_j$.}
\label{frischLloydFigure}
\end{figure}

%
%

\subsection{Products of the type $n_- \,n_+$}
\label{dysonSubsection}
The perturbative analysis of products consisting of elements of the form
$$
g_j = n_- (t_j)\,n_+(\tau_j)\,,
$$
where $t_j$ and $\tau_j$ are as before, requires no further adjustment other than
updating the definitions of the operators ${\mathscr A}_0$ and ${\mathscr B}$:
\begin{equation}
{\mathscr A}_0  = \frac{1}{\chi(s)} \left ( 1 - \frac{\i s}{\rho} \frac{d^2}{d s^2} \right )\,,\;\; {\mathscr B} = \frac{1}{\chi(s)} \frac{2 \i}{\rho} \frac{d}{d s}\,.
\label{operatorsForNminusNplus}
\end{equation}
This leads to the revised formulae
\begin{equation}
\lambda_1 = -\frac{2}{\rho} \,\text{Im} \left [ f_0'(0+)\right ]
\label{dysonFormulaForLambda1}
\end{equation}
and
\begin{equation}
\lambda_2 =  -\frac{2}{\rho} \,\lim_{s \rightarrow 0+}  \text{Im} \left [ f_{1}'(s) \right ] = 2 \int_0^\infty \frac{ds}{s} \,\text{Re} \left [ \lambda_1 \chi(s) f_0^2(s) - \frac{2 \i}{\rho} f_0(s) f_0'(s) \right ]\,.
\label{dysonFormulaForLambda2}
\end{equation}

As an illustration, we can solve Exercise 5.5 in \cite{BL}, which corresponds to taking for $t_j$ an exponential distribution of mean $1/p$. The solution of the
Dyson--Schmidt equation is
\begin{equation}
f_0 (s) = \frac{\sqrt{p-\i s} \,K_1 \left ( 2 \sqrt{\rho \,(p-\i s)} \right )}{\sqrt{p}\,K_1 \left ( 2 \sqrt{\rho \,p} \right )}\,.
\label{letacSeshadriExample}
\end{equation}
It follows in particular that the invariant distribution is a generalised inverse Gaussian
--- a fact discovered by Letac \& Seshadri \cite{LS}.
After some straightforward manipulations, we deduce from Equations (\ref{growthRateAndVarianceInTermsOfLambda}) and  (\ref{dysonFormulaForLambda1})
$$
\gamma = -\frac{\lambda_1}{2} = \frac{1}{\sqrt{\rho\,p}} \,\frac{K_0 \left (2 \sqrt{\rho\,p} \right )}{K_1 \left (2 \sqrt{\rho\,p} \right )}\,.
$$
Plots of $\gamma$ and of the variance $\sigma^2 = \lambda_1^2/4- \lambda_2/2$ against $1/\rho$, for a fixed $p$, are
shown in Figure \ref{dysonFigure}.

\begin{figure}
\includegraphics[width=8cm,height=6cm]{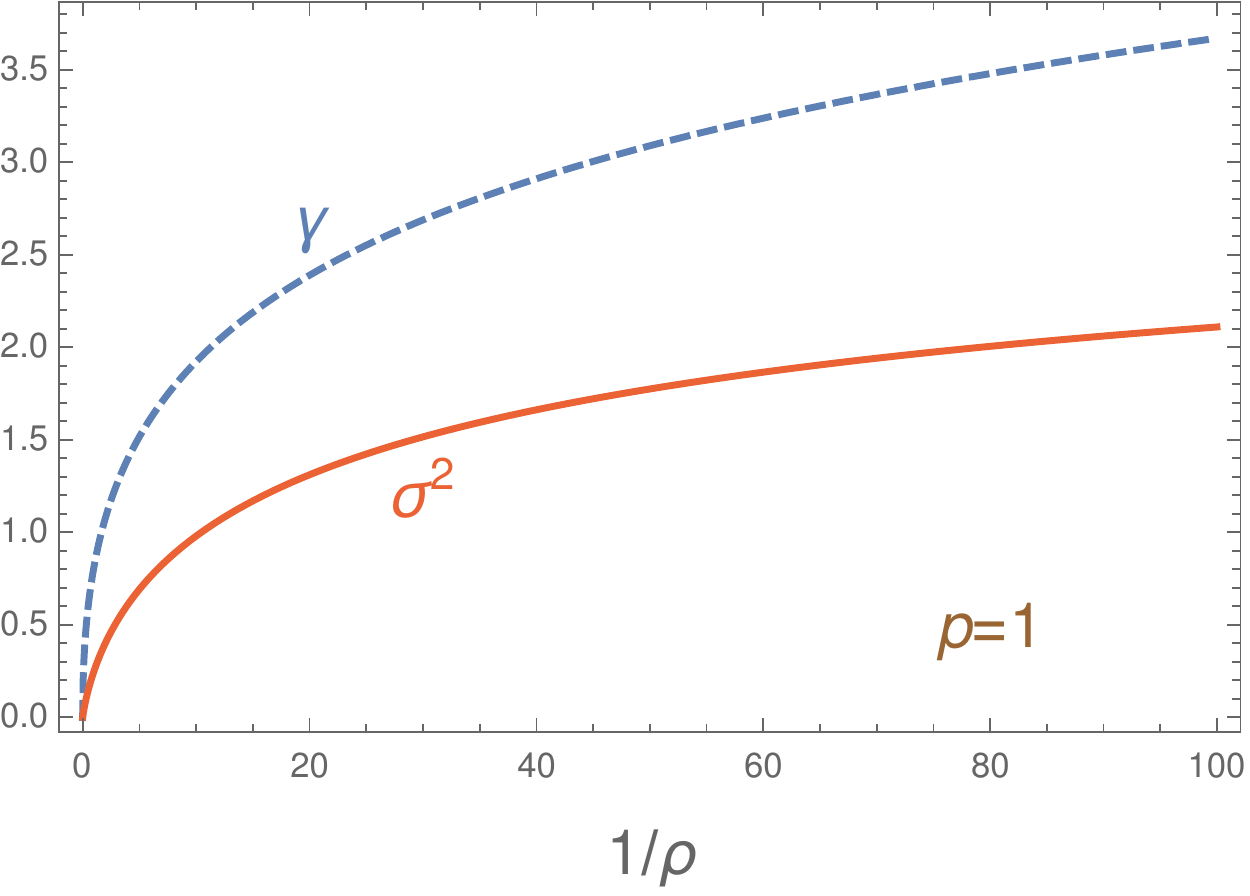}%
%
\caption{The growth rate $\gamma$ and the variance $\sigma^2$ for the product of type
$n_-(t_j) \,n_+ (\tau_j)$ discussed in \S \ref{dysonSubsection}, plotted against the mean value $1/\rho$ of $\tau_j$, for a fixed unit mean $1/p=1$ of $t_j$.}
\label{dysonFigure}
\end{figure}

If we replace $\rho$ by $-\rho$, we recover the model studied by Kotani in \cite{Ko},
\S 5, Example 1, for which the Lyapunov exponent may be expressed in terms of a
Hankel function.

\subsection{Products of the type $n_-  \,a_1$} 
The matrices are
lower triangular in this case; the diagonal entries depend only on the $a_1$ component, and so the calculation of the generalised Lyapunov exponent is trivial. For instance, if
$t_j$ is exponentially
distributed with mean $1/p$, we deduce directly from the definition (\ref{generalisedLyapunovExponent}) that
$$
\Lambda (2 \ell) = -\ln \left ( 1- \frac{\ell}{\rho} \right )\;\;\text{for $\text{Re} \,\ell < \rho$}\,.
$$
This simplification manifests itself in the fact that, as can be seen from the entries of
Table \ref{infinitesimalGeneratorInContinuousFourierSpaceTable}, the inverse of the transfer operator
$$
{\mathscr T}_\ell^{-1} 
=\left ( 1 -\frac{1}{\rho} {A}_1 \right )   \left ( 1 -\frac{1}{p} {N}_- \right )  
$$
and hence also its adjoint ${\mathscr A}_\ell$ are both {\em first-order differential operators}:
$$
{\mathscr T}_\ell^{-1} = \left [ 1 + \frac{1}{\rho} \left ( s \frac{d}{d s} + \ell+1 \right ) \right ] \left (
1+\frac{\i s}{p} \right )\,,
\;\;
{\mathscr A}_\ell = \left (
1-\frac{\i s}{p} \right ) \left [ 1 - \frac{1}{\rho} \left ( s \frac{d}{d s} - \overline{\ell} \right ) \right ] \,.
$$

Despite such simplicity, it turns out that the perturbative approach 
does not produce the correct result in this case.
The Dyson--Schmidt equation
$$
\frac{d f_0}{d s} + \frac{\rho}{s+\i p} f_0 = 0
$$
is readily solved by separating the variables:
$$
f_0(s) = c (s + \i p)^\rho
$$
for some constant $c$. But this function cannot be the Fourier transform of a probability
density.
We are in a situation where {\em there is no measure on the Furstenberg boundary invariant for this product.} The probability distribution on $\text{SL}(2,{\mathbb R})$
corresponding to this random product fails to satisfy Furstenberg's ``strong irreduciblity
condition''. This condition says that the smallest subgroup containing the distribution's support, acting on ${\mathbb R}^2$, should not leave
invariant a finite union of one-dimensional subspaces \cite{Fu}; without it, and without some other conditions, the existence of an invariant measure cannot be guaranteed. We refer the reader to \cite{BL}, Theorem 4.1 and Proposition 4.3, for a useful list of such conditions.

If one changes the probability distribution
so that it is now $-\tau_j$ that is exponentially distributed with mean $1/\rho$ then
the generalised Lyapunov exponent does not change but
the invariant measure does exist. Indeed, this change is tantamount to replacing 
$\rho$ by $-\rho$ in the Dyson--Schmidt equation; the solution is now given by
$$
f_0 (s) = \frac{c}{(s + \i p)^\rho}
$$
and, after normalisation, we recognise the Fourier transform of a gamma density.
We can use the recurrence relation (\ref{recurrenceRelation}) with
$$
v_0 = \delta\;\;\text{and}\;\;{\mathscr B} = - \frac{1}{\rho} {I}
$$
and the perturbative approach produces the correct result
$$
\lambda_1 = -\frac{1}{\rho} \;\;\text{and}\;\;\lambda_2 = 0\,.
$$

\section{Realisation on the hyperbola}
\label{hyperbolaSection}
For the diagonalisation of $T_\ell$ with respect to the subgroup $A_1$, it is convenient to work with the realisation on the two-branched hyperbola of equation 
$$
x_2  = \frac{1}{|x_1|}\,.
$$
To every $v \in V_\ell$, we associate the function 
$\widetilde{v} :\,{\mathbb R}_\ast \rightarrow {\mathbb C}$ defined by
$$
\widetilde{v}(x) := v \left ( \frac{x}{\sqrt{|x|}}, \frac{1}{\sqrt{|x|}} \right )\,.
$$
We denote by $Q$ the operator that assigns $\widetilde{v}$ to $v$ and 
define the space $\widetilde{V}_\ell$ of complex-valued functions on ${\mathbb R}_\ast$ by Equation (\ref{realisationSpace}). 
By making use of the identity
$$
\widetilde{v} (x) = |x|^{-\ell} \,v (x,1)
$$
we easily deduce the following from Lemma \ref{representationSpaceLemma}:
\begin{lemma}
For $\widetilde{v}$ to belong to $\widetilde{V}_\ell$, it is necessary and sufficient that both the functions
$$
{\mathbb R}_\ast \ni x \mapsto |x|^\ell \,\widetilde{v}(x)\;\;\text{and}\;\;
{\mathbb R}_\ast \ni x \mapsto |x|^\ell \,\widetilde{v}(1/x)
$$
be smooth.
\label{representationSpaceLemmaForTheHyperbola}
\end{lemma}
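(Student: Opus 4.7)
\smallskip
\noindent\textbf{Proof plan.} My plan is to reduce this lemma to Lemma \ref{representationSpaceLemma} by a direct change of parametrisation, exploiting the homogeneity of degree $2\ell$ that characterises elements of $V_\ell$.

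\smallskip
\noindent\emph{First step.} I would start from the definition of the map $Q$ in the hyperbola realisation,
\[
\widetilde{v}(x) = v\!\left(\frac{x}{\sqrt{|x|}},\,\frac{1}{\sqrt{|x|}}\right),\qquad x \in {\mathbb R}_\ast\,.
\]
Using the positive homogeneity of $v$ of degree $2\ell$ with scaling factor $\tau = \sqrt{|x|}$, this simplifies to
\[
\widetilde{v}(x) = |x|^{-\ell}\,v(x,1)\,.
\]
Setting $f(x) := v(x,1)$, which is precisely the function assigned to $v$ by the map $Q$ of the line realisation in \S \ref{lineSection}, I obtain the key identity $f(x) = |x|^\ell\,\widetilde{v}(x)$ advertised in the statement.

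\smallskip
\noindent\emph{Second step.} I would then translate the two conditions of Lemma \ref{representationSpaceLemma} into conditions on $\widetilde{v}$. The first condition there --- smoothness of $f$ --- becomes, by the identity above, the requirement that $x \mapsto |x|^\ell\,\widetilde{v}(x)$ be smooth. For the second condition --- smoothness of $x \mapsto |x|^{2\ell}f(1/x)$ --- I substitute and compute
\[
|x|^{2\ell}\,f(1/x) = |x|^{2\ell}\cdot |1/x|^{\ell}\,\widetilde{v}(1/x) = |x|^{\ell}\,\widetilde{v}(1/x)\,,
\]
so this condition becomes the smoothness of $x \mapsto |x|^\ell\,\widetilde{v}(1/x)$. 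Conversely, if both functions displayed in the statement are smooth, then $f$ satisfies the hypotheses of Lemma \ref{representationSpaceLemma}, so the corresponding $v$ belongs to $V_\ell$ and $\widetilde{v} = Qv$ lies in $\widetilde{V}_\ell$. This establishes necessity and sufficiency together.

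\smallskip
\noindent\emph{Anticipated obstacle.} There is essentially no analytical difficulty; the lemma is a pure book-keeping exercise once the homogeneity reduction to the line realisation is made. The only mild subtlety worth flagging is the domain question: $f$ is defined on all of ${\mathbb R}$ while $\widetilde{v}$ is only defined on ${\mathbb R}_\ast$, the excluded point $x=0$ corresponding to one of the coordinate axes which the hyperbola $x_2 = 1/|x_1|$ never meets. Smoothness of the two displayed functions on ${\mathbb R}_\ast$ must here be understood as smoothness of their extensions to ${\mathbb R}$ --- that is, of the functions $f(x)$ and $|x|^{2\ell}f(1/x)$ obtained via the identity of the first step --- and with this reading the equivalence with Lemma \ref{representationSpaceLemma} is exact.
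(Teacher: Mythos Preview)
Your proposal is correct and follows exactly the route the paper indicates: the paper simply records the identity $\widetilde{v}(x)=|x|^{-\ell}\,v(x,1)$ and says the lemma is then ``easily deduced'' from Lemma~\ref{representationSpaceLemma}, which is precisely the reduction you carry out in detail. Your added remark on the domain mismatch between ${\mathbb R}$ and ${\mathbb R}_\ast$ is a useful clarification that the paper leaves implicit.
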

In particular, if $\widetilde{v} \in \widetilde{V}_\ell$, then the limits
$$
\lim_{x \rightarrow \infty} |x|^{-\ell}\,\widetilde{v}(x)
\;\;\text{and}\;\;
\lim_{x \rightarrow -\infty} |x|^{-\ell}\,\widetilde{v}(x)
$$
exist and are equal. Furthermore, the limit
$$
\lim_{x \rightarrow 0} |x|^\ell\,\widetilde{v}(x)
$$
also exists.

The map $Q :\, V_{\ell} \rightarrow \widetilde{V}_{\ell}$ is bijective and the
representation $T_{\ell}$ is realised on the space $\widetilde{V}_{\ell}$ via Equation (\ref{intertwinning}):
\begin{equation}
\left [ \widetilde{T}_\ell(g) \widetilde{v} \right ] (x) :=  
 \left | \frac{(a x+ c)(b x + d)}{x} \right |^{\ell}\,
 \widetilde{v}\left ( \frac{a x + c}{b x+d} \right )\,.
\label{hyperbolaRepresentation}
\end{equation}
From now on, we shall omit the tilde.

We remark that the measure $dx/|x|$ is invariant under the action
of the subgroup $A_1$. The corresponding multiplier is given by the formula
$$
\sigma_{A_1} (x,g) := \frac{|x|}{\left | x \cdot g \right |} \frac{d}{d x} \left (  x \cdot g \right ) = \left |\frac{x}{(a x+c) ( b x +d)} \right |
$$
and so the realisation on the hyperbola may also be expressed as
\begin{equation}
\left [ T_\ell(g) v \right ] (x) = \sigma_{A_1} (x,g)^{-\ell} \,v ( x \cdot g )\,.
\label{hyperbolaRealisationWithMultiplier}
\end{equation}

\subsection{The adjoint}
For every $v \in V_{\ell}$ and every $f \in V_{\ell^\ast}$, we deduce from the asymptotic
behaviours found above that
$$
|x| \,\overline{f(x)}\,v(x) \xrightarrow[|x| \rightarrow \infty]{} C
\;\;\text{and}\;\;\frac{1}{|x|} \overline{f(x)}\,v(x) \xrightarrow[x \rightarrow 0]{} c
$$
for some constants $c$ and $C$ depending on $f$ and $v$. It follows
that $f$ may be identified with the 
continuous linear functional defined on $V_\ell$ by
\begin{equation}
v \mapsto \int_{{\mathbb R_\ast}} \overline{f(x)}\,v(x)\,\frac{dx}{|x|} =: \braket{f|v}\,.
\label{hyperbolaInnerProduct}
\end{equation}
Hence
$$
V_{\ell^\ast} \subset V_\ell^\ast\,.
$$
It is then readily verified by direct calculation that the adjoint of $T_\ell$ with respect
to this inner product is given by
\begin{multline}
\left [ T_\ell^\ast (g) f \right ] (x) = \sigma_{A_1} ( x,g^{-1})^{-\ell^\ast} f (x \cdot g^{-1} ) \\
= \left | \frac{(dx-c)(a - bx)}{x}  \right |^{\ell^\ast} \,f \left ( \frac{d x-c}{a-bx} \right )\,.
\label{adjointOnTheHyperbola}
\end{multline}

The appropriate Hilbert space for the realisation on the hyperbola is thus
$$
H := \left \{ v :\,{\mathbb R}_\ast \rightarrow {\mathbb C} :\, \int_{{\mathbb R}_\ast} |v(x)|^2\, \frac{dx}{|x|} < \infty \right \}
$$
equipped with the inner product (\ref{hyperbolaInnerProduct}).
In this realisation, $V_\ell \subset H$ only if $\text{Re} \,\ell < 0$.
The sequences
$$
\{ {\mathbf e}_{\ell^\ast,m}\}_{m \in {\mathbb Z}} \subset V_{\ell^\ast}\;\;\text{and} \;\;\{ {{\mathbf e}_{\ell,n}} \}_{n \in {\mathbb Z}}
\subset V_{\ell}
$$
constitute a bi-orthogonal system for this inner product. 

\subsection{Realisation in Mellin space}
In order to diagonalise with respect to the subgroup $A_1$, we work with the realisation on the hyperbola.
First, we assign, to every $v \in V_\ell$, the pair
$$
(v_+,\,v_-) : \, (0,\infty) \rightarrow {\mathbb C}^2
$$
where
$$
v_+(x) := v(x)\;\;\text{and}\;\;v_-(x) := v(-x)\,,\;\;x>0\,.
$$
We can assert that there exist constants
$c$ and $C$, depending only on $v$, such that
$$
v_{\pm} (x) \sim C \,x^\ell \;\;\text{as $x \rightarrow \infty$} \;\; \text{and}\;\;
 v_{\pm} (x) \sim c \,x^{-\ell} \;\;\text{as $x \rightarrow 0+$}\,.
$$
It then follows that the Mellin transforms
$$
\widehat{v}_\pm (s) := \int_0^\infty dx\, x^{\i s-1} v_{\pm}(x)
$$
exist, in the classical sense, in the strip
\begin{equation}
\text{Re} \,\ell < \text{Im} \,s < -\text{Re} \,\ell\,.
\label{mellinStrip}
\end{equation}
We put 
$$
\widehat{v} := (\widehat{v}_+,\widehat{v}_-)
$$
and denote by $Q$ the operator that assigns $\widehat{v}$ to $v$. We may then obtain a new realisation $\widehat{T}_\ell$ of
$T_\ell$ in the space
$$
\widehat{V}_\ell := \left \{ Q v :\, v \in V_\ell \right \}
$$
by setting
$$
\widehat{T}_\ell \,Q = Q \,T_\ell\,.
$$
We call this the realisation in {\em Mellin space}. 
In particular, for
$$
g = \begin{pmatrix}
e^{\frac{t}{2}} & 0 \\
0 & e^{-\frac{t}{2}}
\end{pmatrix}\,,
$$
we have
$$
\left [ T_\ell (g) v \right ]_{\pm} (x) = v_{\pm} \left ( e^t x \right )\;\;\text{for}\; x >0\,. 
$$
We readily deduce
\begin{equation}
\notag
\left [ \widehat{T}_\ell (g) \widehat{v} \right ] (s) = 
e^{-\i s t} \,\widehat{v}(s)
\end{equation}
and so this realisation is indeed diagonal with respect to the subgroup $A_1$.

We have defined the Mellin transform $\widehat{v}_{\pm}$ in such a way that it is, for $s$ real, the Fourier transform of the function
${\mathbb R} \ni t \mapsto v_{\pm}(e^t)$. Parseval's formula for the Fourier transform then leads to
the identity
\begin{equation}
\braket{f | v} := \int_{-\infty}^\infty \frac{d x}{|x|} \,\overline{f(x)} \,v(x) 
= \frac{1}{2 \pi} \int_{-\infty}^\infty d s \left [ \overline{\widehat{f}_+(s)}\,\widehat{v}_+(s) +
\overline{\widehat{f}_-(s)}\,\widehat{v}_-(s)  \right ]
\label{mellinParsevalFormula}
\end{equation}
and we shall use the right-hand side as the definition of the inner product in $\widehat{V}_\ell$.
This choice ensures the bi-orthogonality of $\{ \widehat{\mathbf e}_{\ell^\ast,m} \}_{m \in {\mathbb Z}}$
and $\{ \widehat{\mathbf e}_{\ell,n} \}_{n \in {\mathbb Z}}$.

It is interesting to examine the form that the ${{\mathbf e}_{\ell,n}}$ assume when we go over to the realisation in Mellin
space: By definition,
$$
{\mathbf e}_{\ell,n {\pm}}(x) = \frac{(-1)^n}{\sqrt{\pi}} \left ( \frac{1 \mp \i x}{1 \pm \i x} \right )^n \left ( x+\frac{1}{x} \right )^\ell\,.
$$
Hence
$$
\widehat{\mathbf e}_{\ell,n \pm}(s) =
\frac{(-1)^n}{\sqrt{\pi}} \int_0^\infty dx \,x^{\i s - \ell-1} (1 \mp \i x)^{\ell+n} (1 \pm \i x)^{\ell-n}\,.
$$
The value of this integral is given by Formula 6.2.35 in \cite{Er}:
\begin{equation}
\widehat{\mathbf e}_{\ell,n \pm}(s) = \frac{(-1)^n}{\sqrt{\pi}} \,e^{\pm \i \frac{\pi}{2}(\ell-\i s)} \,{\tt B} \left ( -\ell+\i s,\,-\ell-\i s \right )
{_2} F_1 \left (-\ell - n, \i s-\ell; -2 \ell; 2  \right )\,.
\label{hypergeometricFormula}
\end{equation}
For $\ell \in \{-1,\,-2,\,\ldots \}$ and $n \ge -\ell$,
the hypergeometric function in this formula may be expressed in terms of the Meixner--Pollaczek
polynomials \cite{Koo}. An alternative inner product may then be found that makes $\widehat{T}_\ell^-$ unitary.

It would be tedious to write down $\widehat{T}_\ell(g)$ explicitly for an arbitrary $g \in G$; see
\cite{Vi,VK1}.
For our purposes, we only need to know the infinitesimal generators associated with the various subgroups, and these can be worked out by taking the Mellin transform of the corresponding
generator for $T_\ell$. Thus, for instance, in order to compute $\widehat{K}$, we use
the fact that, for $x>0$,
$$
\left ( {K} v \right )_{\pm} (x) =  \left ( {K} v \right ) (\pm x) = \pm \left [ \frac{\ell}{2} \left ( \frac{1}{x} - x \right ) v_\pm(x) + \frac{1+x^2}{2}
\,v_\pm'(x) \right ]\,.
$$
If we then multiply each of these identities by $x^{\i s-1}$ and integrate over $x$, we obtain
$$
\left ( \widehat{K} \,\widehat{v} \right )_{\pm}(s) = \pm \frac{1}{2}( \ell+1-\i s) \,\widehat{v}_{\pm}(s+\i) \mp
\frac{1}{2} (\ell+1+\i s) \,\widehat{v}_{\pm}(s-\i)\,.
$$
The other generators are listed in Table \ref{infinitesimalGeneratorInMellinSpace} where, as is our practice when there is no risk of confusion, we have dropped the hats.

\begin{table}
\begin{tabular}{c}
\hline
\\ 
$\left ( {K} \,{v} \right )_{\pm}(s) = \pm \frac{1}{2}( \ell+1-\i s) \,{v}_{\pm}(s+\i) \mp
\frac{1}{2} (\ell+1+\i s) \,{v}_{\pm}(s-\i)$ \\
 \\
$\left ( A_1 {v} \right )_{\pm} = -\i s\,v_{\pm}(s)$ \\
\\
$\left ( A_2 \,{v} \right )_{\pm}(s) = \pm \frac{1}{2}( \ell+1-\i s) \,{v}_{\pm}(s+\i) \pm
\frac{1}{2} (\ell+1+\i s) \,{v}_{\pm}(s-\i)$ \\
\\
 $\left ( N_+ {v} \right )_{\pm}(s) = \pm \left ( \ell +1 + \i s \right ) v_{\pm}(s-\i)$ \\
  \\
 $\left ( N_- {v} \right )_\pm (s)= \pm \left ( \ell+1 -\i s \right ) v_{\pm}(s+\i)$ \\
 $ $ \\
  \hline
\\[0.125cm]
\end{tabular}
\caption{Infinitesimal generators associated with the realisation of $T_\ell$ in Mellin space.}
\label{infinitesimalGeneratorInMellinSpace}
\end{table}

Although in principle this realisation should be advantageous for the study of products 
involving the subgroup $A_1$, the difference operators that arise have the awkward feature
that the increments are imaginary, whereas the auxiliary condition for the Dyson--Schmidt
equation is that the solution should decay as the independent variable $s$ tends to infinity
along the real axis. This makes it difficult to identify the correct solution without resorting to
the realisation on the hyperbola; see \cite{CTT} for examples 
of products of the type $a_1 \,a_2$.
This difficulty is exacerbated when we try to compute the next terms in the perturbation
expansion.

\section{Concluding remarks}
\label{conclusionSection}
In this paper, we have shown how the calculation of the Lyapunov exponent for certain probability distributions on the group
$\text{SL}(2,{\mathbb R})$ can be carried out successfully by using a family of representations in a space of functions on the Furstenberg boundary. 
The cases that we studied are of interest as models of one-dimensional
systems and led to simple formulae for the growth rate and the variance;
we found that the special functions that appear in these formulae are the same as those
that serve to express the basis functions in a suitable realisation of the representation space. One can reasonably expect
that the calculation of the generalised Lyapunov exponent can be pushed further. 
Admittedly, the probability distributions considered here
had very special features; in every case, essential use was made of the univariate exponential distribution, and it is not clear why this distribution should play a distinguished part in connection with $\text{SL}(2,{\mathbb R})$. 

Furstenberg's theory applies to a large class of semi-simple groups. For example, the Furstenberg boundary of the real symplectic group--- which provides a generalisation
of $\text{SL}(2,{\mathbb R})$ that is of particular interest in the context of
disordered systems--- is the (isotropic) flag manifold \cite{CL}. The so-called
``degenerate principal series representations'' are defined in spaces of functions
on (parts of) that manifold and bear some resemblance with the family of representations considered here; they have been studied by Lee \cite{Le}, following
ideas put forward by Howe \& Tan \cite{HT}.  Likewise, Vilenkin's ideas
are not restricted to the group $\text{SL}(2,{\mathbb R})$
and have been developed extensively in recent years \cite{VK1,VK2}.
One can therefore envisage the existence of special functions associated with other semi-simple groups that might be useful in the study of random
products.  
It is apparent, however, that any extension of our results to larger groups would require a fairly detailed knowledge
of their representation theory, as well as a high level of computational skill.

\bibliographystyle{amsplain}

\end{document}